\documentclass{amsart}

\usepackage{graphicx}

\newtheorem{theorem}{Theorem}[section]
\newtheorem{lemma}[theorem]{Lemma}
\newtheorem{prop}[theorem]{Proposition}

\theoremstyle{definition}
\newtheorem{definition}[theorem]{Definition}
\newtheorem{example}[theorem]{Example}

\theoremstyle{remark}

\numberwithin{equation}{section}

\begin{document}

\title[Some interpretations in QFT]{Some combinatorial interpretations in perturbative quantum field theory}

\author{Karen Yeats}
\address{Department of Mathematics, Simon Fraser University, 8888 University Dr, Burnaby BC, Canada}
\email{karen.yeats@sfu.ca}

\subjclass[2010]{Primary 81T16, 81T18}

\date{}

\begin{abstract}
This paper will describe how combinatorial interpretations can help us understand the algebraic structure of two aspects of perturbative quantum field theory, namely analytic Dyson-Schwinger equations and periods of scalar Feynman graphs.  The particular examples which will be looked at are, a better reduction to geometric series for Dyson-Schwinger equations, a subgraph which yields extra denominator reductions in scalar Feynman integrals, and an explanation of a trick of Brown and Schnetz to get one extra step in the denominator reduction of an important particular graph.
\end{abstract}

\maketitle

\section{Introduction}

In the subtle worlds of periods and quantum field theory it can often be hard to clearly see one's way.  Combinatorial interpretations can help by giving us explicit objects to get our hands dirty with.  This paper will discuss two situations with such interpretations in perturbative quantum field theory.

The first situation, which is discussed in Section \ref{sec 2}, concerns combinatorial understanding of analytic aspects of Dyson-Schwinger equations.  One example of this is the recent chord diagram expansion for a class of Dyson-Schwinger equations given by Nicolas Marie and the author in \cite{MYchord}.  For those readers who came to this proceedings volume looking for information related to the conference talk, the results of \cite{MYchord} are summarized.  Another example is the reduction process of \cite{kythesis, Ymem}.  In this paper I will prove an observation which provides a conceptual clean-up to the reduction to geometric series of \cite{kythesis, Ymem}.

The second situation, which is discussed in Section \ref{sec cov}, concerns understanding when Brown's denominator reduction approach to calculating periods of scalar Feynman graphs \cite{Brbig} yields results which are nicer than expected.  In particular we will look at when we get a ``free'' factorization, and one circumstance where we can proceed an extra step after denominator reduction fails, using a change of variables, as was done by Brown and Schnetz in section 6.2 of \cite{BrS} in order to find a K3 surface.  Note that tricks such as the Brown-Schnetz change of variables were found without such combinatorial interpretations, however the interpretation can tell us something about \emph{why} the change of variables does its job.

The new material is Theorem \ref{thm geometric}, Proposition \ref{prop new free}, and Theorem \ref{thm cov}, but also interesting is the overall story of nice interpretations involving these results woven together with results of Brown \cite{Brbig}, Brown and Schnetz \cite{BrS}, the author with Nicolas Marie \cite{MYchord}, and the author \cite{kythesis, Ymem}.

\section{Dyson-Schwinger equations and the chord diagram expansion}\label{sec 2}

\subsection{The Hopf algebra of rooted trees}

Let $\mathcal{H}$ be a renormalization Hopf algebra.  It will suffice for our purposes here to take $\mathcal{H}$ to be the Connes-Kreimer Hopf algebra of rooted trees \cite{ck0}, but this framework functions in the same way for Hopf algebras of Feynman graphs \cite{Ymem, kythesis}.  

As an algebra, the Connes-Kreimer Hopf algebra is the polynomial algebra over the set of rooted trees.  Monomials can also be viewed as disjoint unions of trees and hence are forests.  Rooted trees are taken without regards to planar embedding, and with no restrictions on valence.  The size of a tree is its number of vertices.  This yields a grading on the algebra.  

The coalgebra structure is not essential for the presentation of the main result of this section, however it underlies this entire theory and thus will be described briefly.  To define the coproduct we need a few auxiliary definitions.  An \emph{admissible cut} on a rooted tree $T$ is a set of vertices (possibly empty) of $T$ with the property that no vertex is an ancestor of another.  Given an admissible cut $c$ of a rooted tree $T$ the \emph{pruned part}, $P_c(T)$ is the forest of trees rooted at the vertices of $c$; the \emph{root part}, $R_c(T)$, is the tree resulting from removing $P_c(T)$ from $T$.  Note that admissible cuts are usually defined based on edge cuts but the definitions are equivalent provided the trivial and full cuts are included as admissible cuts in the edge-based definitions. 
   The coproduct for the Connes-Kreimer Hopf algebra is 
\[
\Delta(T) = \sum_{\substack{c \text{ admissible}\\\text{cut of }T}}P_c(T)\otimes R_c(T)
\]
on trees $T$ and extended as an algebra homomorphism.  Note that the empty cut yields the term $1\otimes T$ and the cut which consists of the root alone yields $T\otimes 1$.  The counit is given by $\eta(1)=1$ and $\eta(T)=0$ for any tree $T$ and extended as an algebra homomorphism.  This defines a graded bialgebra.  The grading gives the antipode recursively from its defining property.

\subsection{Combinatorial Dyson-Schwinger equations}

Combinatorial Dyson-Schwinger equations are recurrences, sometimes systems of recurrences, with solutions which are formal power series in $\mathcal{H}$.  Some examples in rooted trees follow  the definitions.  For Feynman graphs, the combinatorial Dyson-Schwinger equations are defined so that the solutions are sums of all graphs with a given external structure.  Thus once we apply Feynman rules we obtain the Green functions of the theory; these are physically important quantities.

\begin{definition}
In the Connes-Kreimer Hopf algebra let $B_+$ be the endomorphism which takes a forest $T_1T_2\cdots T_k$ to the rooted tree where the children of the root are $T_1,T_2,\ldots,T_k$.
\end{definition}
For example 
\[
  B_+\left(\raisebox{-1ex}{\includegraphics{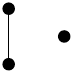}}\right) = \raisebox{-3ex}{\includegraphics{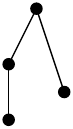}}
\]  

In other renormalization Hopf algebras there may be many $B_+$ operations indexed by primitive elements of the Hopf algebra.  The $B_+$ must be Hochschild 1-cocycles, that is they satisfy
\[
  \Delta B_+ = (\text{id}\otimes B_+)\Delta + B_+\otimes 1
\]
For the $B_+$ of rooted trees this equation can be seen to hold by considering the decomposition of a tree into the root and its subtrees.  In other renormalization Hopf algebras it may be necessary to work in a quotient Hopf algebra in order to have the 1-cocycle property; this is the situation in gauge theories \cite{vS, vS2}.

\begin{definition}
\emph{Combinatorial Dyson-Schwinger equations} are systems of equations of the form
\begin{align*}
  X_1(x) & = 1 \pm \sum_{k \geq 1}x^kB_+^{k,1}(P_{k,1}(X_1(x),\ldots,X_t(x))) \\
  \vdots & \\
  X_t(x) & = 1 \pm \sum_{k \geq 1}x^kB_+^{k,t}(P_{k,t}(X_1(x),\ldots,X_t(x)))
\end{align*}
where the $B_+^{k,t}$ are 1-cocycles and the $P_{k,t}$ are rational functions.  
\end{definition}
There are two important restrictions to make, the first of which brings us to the key cases in physics, and the second of which brings us to the problems of this section.

First, let us restrict our interest to 
\[
P_{k,t}(X_1,\ldots, X_t) = X_t Q^k
\]
where
\[
Q = \prod_{r=1}^t X_r^{s_r}
\]
with the $s_r$ integers.  The signs in the Dyson-Schwinger equation are then taken to be the signs of the corresponding $s_r$; note that this is the opposite sign convention from \cite{Ymem, kythesis}.  The series $X_r$ all begin with $1$ so taking formal inverses makes sense.  Here $Q$ is the combinatorial avatar of the physicists' \emph{invariant charge}.  The fact that there is such a $Q$ in physically important cases, such as quantum chromodynamics, comes down to the same quotient which made the $B_+$s into 1-cocycles \cite[section 2]{anatomy}.

Second, let us further restrict to single equation Dyson-Schwinger equations
\begin{equation}\label{eq sDSE}
X(x) = 1 + \text{sgn}(s) \sum_{k \geq 1}x^kB_+^{k}(X^{1+ks}(x)) 
\end{equation}
where $\text{sgn}(s)$ is the sign of $s$.

Consider a few examples of such equations to get a feel for how they work.

\begin{example}
  Consider
  \[
  X(x) = 1+xB_+(X(x))
  \]
  in the Connes-Kreimer Hopf algebra of rooted trees.  This is $s=0$ in \eqref{eq sDSE}.   Expanding we get
  \begin{align*}
  X(x) & = 1+ O(x) \\
   & = 1+ xB_+(1 + O(x)) = 1+ x\includegraphics{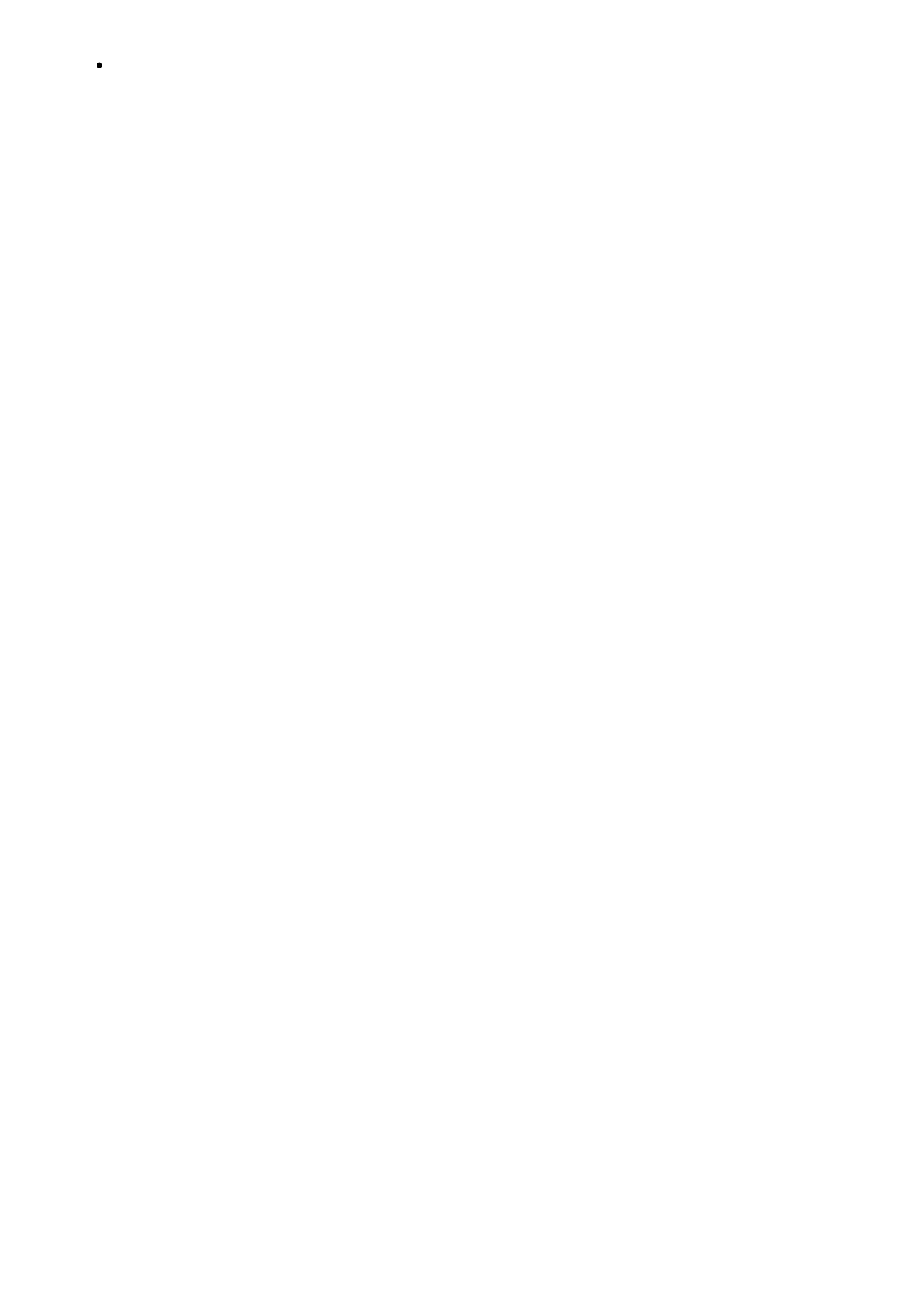} + O(x^2) \\
   & = 1 + xB_+(1+x\includegraphics{B} + O(x^2)) = 1 + x\includegraphics{B} + x^2\includegraphics{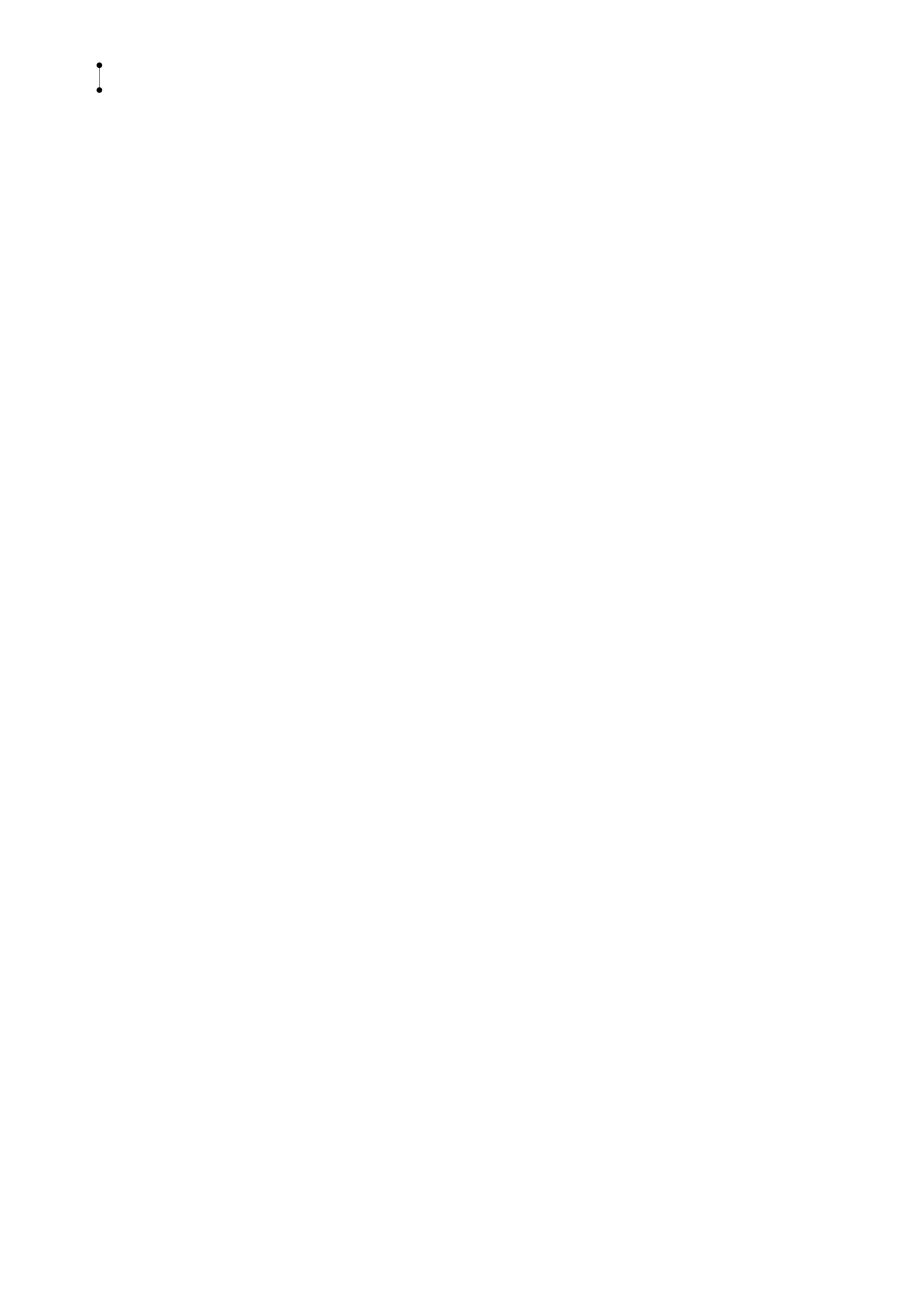} +O(x^3)\\
  & = 1 + xB_+(1 + x\includegraphics{B} + x^2\includegraphics{BB} +O(x^3)) = 1 + x\includegraphics{B} + x^2\includegraphics{BB} + x^3\includegraphics{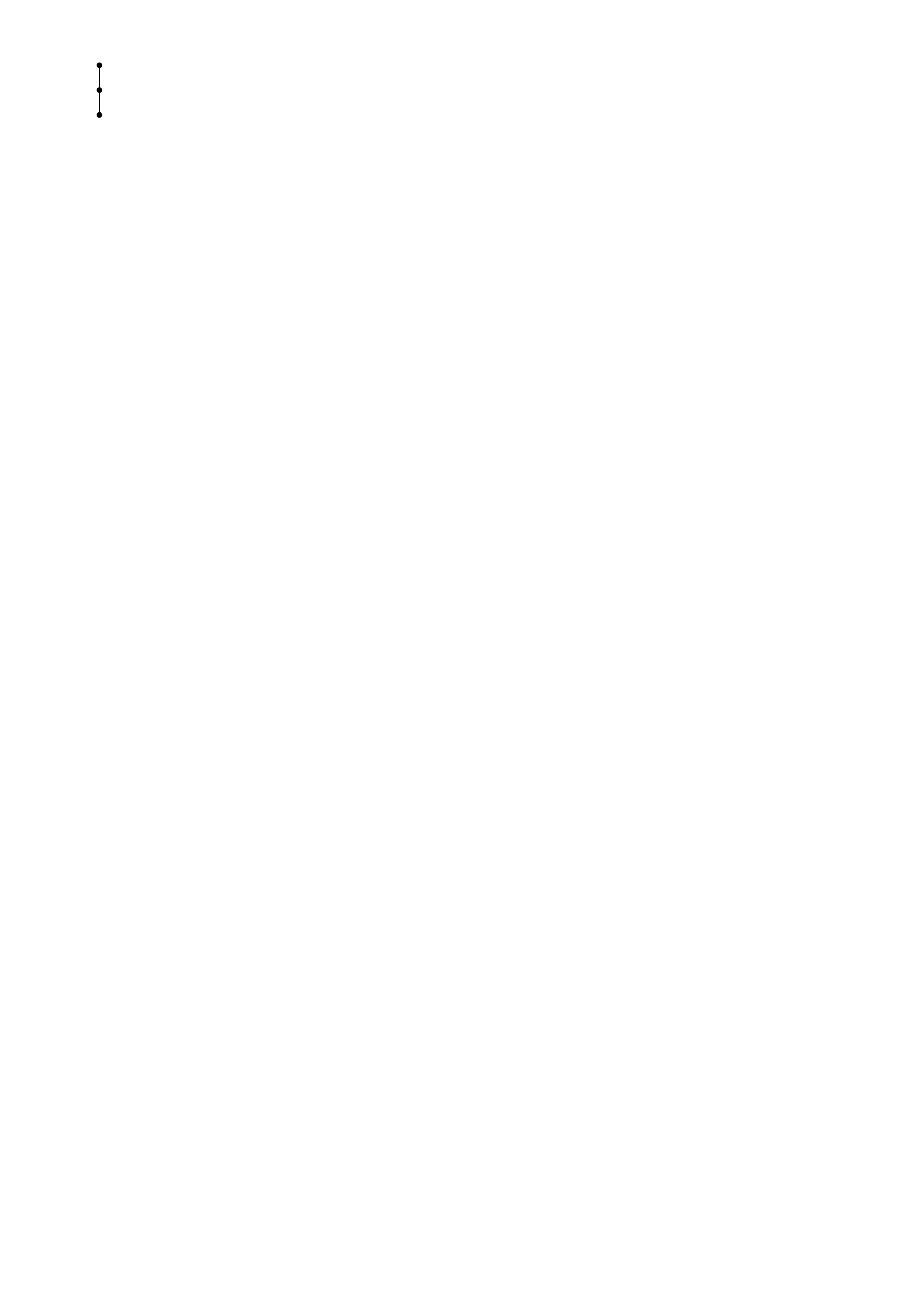} + O(x^4)
  \end{align*}
  One can quickly see that the coefficient of $x^n$ in $X(x)$ will be the tree with $n$ vertices and no branching.
\end{example}

\begin{example}
  Consider
  \[
  X(x) = 1+xB_+(X^2(x))
  \]
  in the Connes-Kreimer Hopf algebra.  This is $s=1$ in \eqref{eq sDSE}.   Expanding we get
  \begin{align*}
  X(x) & = 1+ O(x) \\
   & = 1+ xB_+(1 + O(x)) = 1 + x\includegraphics{B} + O(x^2) \\
   & = 1 + xB_+((1 + x\includegraphics{B} + O(x^2))^2) = 1 + x\includegraphics{B} + 2x^2\includegraphics{BB} + O(x^3) \\
   & = 1 + xB_+\left(\left(1 + x\includegraphics{B} + 2x^2\includegraphics{BB} + O(x^3)\right)^2\right) \\
  & = 1 + x\includegraphics{B} + 2x^2\includegraphics{BB} + 4x^3\includegraphics{BBB} + x^3\includegraphics{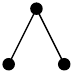} + O(x^4)
  \end{align*}
  Here we are getting all binary trees weighted by the number of distinct ways that the children of every vertex can be assigned to be left or right children, with at most one of each for each vertex.  This can be proved with a quick induction.
\end{example}

\begin{example}
  Consider
  \[
  X(x) = 1-xB_+(X^{-1}(x))
  \]
  in the Connes-Kreimer Hopf algebra.  This is $s=-2$ in \eqref{eq sDSE}.  Expanding we get
  \begin{align*}
  X(x) & = 1+ O(x) \\
   & = 1- xB_+(1 + O(x)) = 1 - x\includegraphics{B} + O(x^2) \\
   & = 1 - xB_+(1 + x\includegraphics{B} + O(x^2)) = 1 - x\includegraphics{B} - x^2\includegraphics{BB} + O(x^3) \\
   & = 1 - xB_+\left(1 + x\includegraphics{B} + x^2\includegraphics{BB} + \left( x\includegraphics{B} + x^2\includegraphics{BB}\right)^2 + O(x^3)\right) \\
  & = 1 - x\includegraphics{B} - x^2\includegraphics{BB} - x^3\includegraphics{BBB} - x^3\includegraphics{BLR} + O(x^4) \\
  & = 1 - x\includegraphics{B} - x^2\includegraphics{BB} - x^3\includegraphics{BBB} - x^3\includegraphics{BLR} - x^4\includegraphics{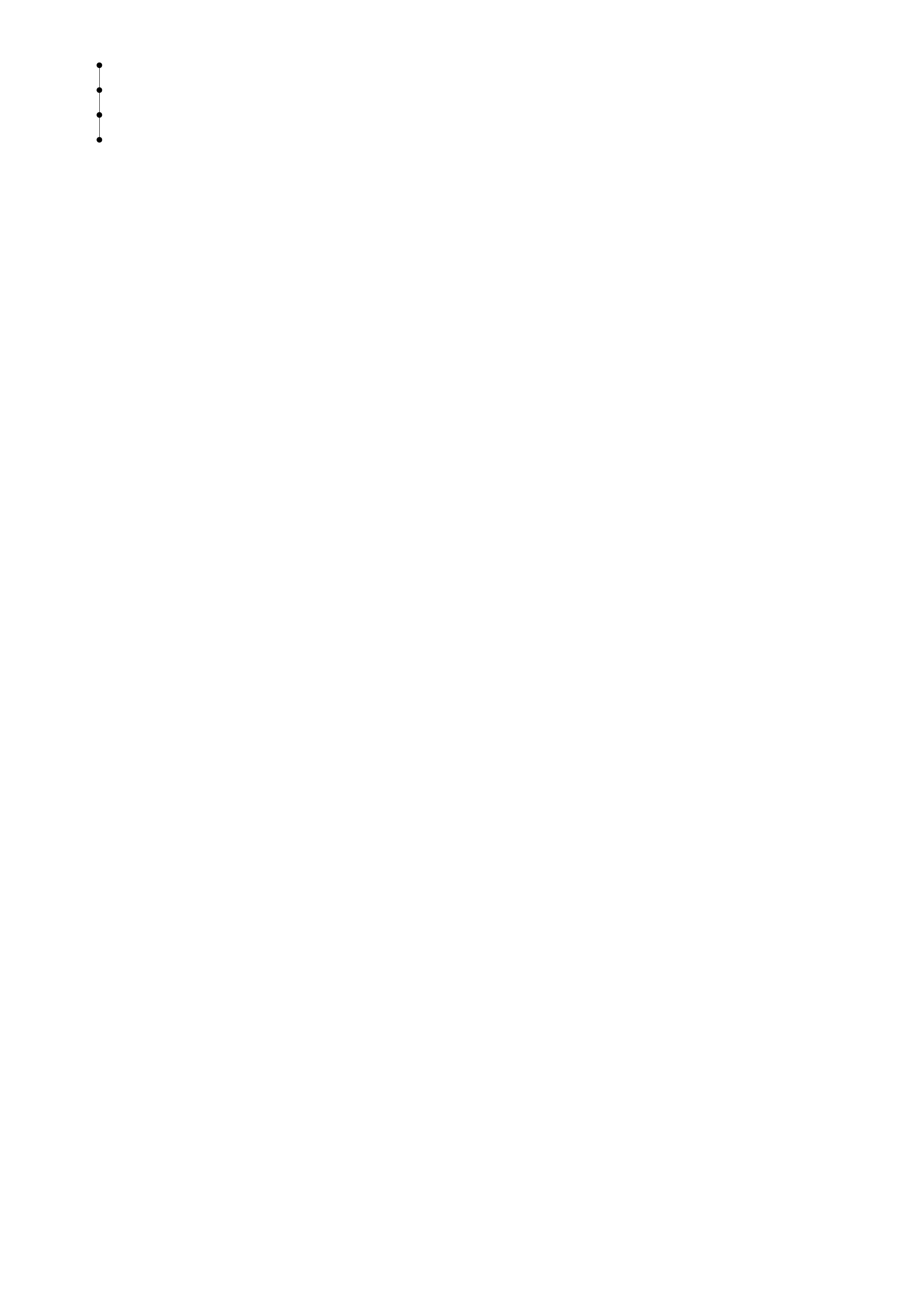} - 2x^4\includegraphics{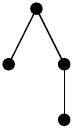} - x^4\includegraphics{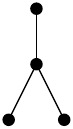} - x^4\includegraphics{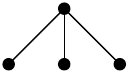} \\
  & \quad \quad + O(x^5)
  \end{align*}
  Here we are getting all rooted trees weighted by the number of distinct plane embeddings. 

  This is a very important example in what follows as it is the case for which we can solve the corresponding analytic Dyson-Schwinger equation using a chord diagram expansion.
\end{example}

\subsection{Analytic Dyson-Schwinger equations}

For a workshop on periods we cannot be satisfied with combinatorial Dyson-Schwinger equations as we need to have analytic information, thence periods.

We can no longer completely ignore the Feynman graphs as we will need the analytic structure of the primitives.  For the purposes of this paper it suffices to consider situations where we insert in only one edge $e$ of each primitive $\gamma$.  Take the Feynman integral of $\gamma$ and regularize it by raising the propagator associated to $e$ to the power $1-\rho$.  This converges for small nonzero values of $\rho$.  Fix the external momenta and expand the result as a Laurent series in $\rho$.   Call this series
\[
F_\gamma(\rho)
\]
Assume that $F_\gamma(\rho)$ has a first order pole at $0$.  This is the situation in the physical cases of interest since $\gamma$ is primitive.

\begin{definition}\label{def aDSE}
Given a combinatorial Dyson-Schwinger of the form
\[
X(x) = 1 + \text{sgn}(s) \sum_{k \geq 1}x^kB_+^{k}(X^{1+ks}(x)) 
\]
the associated \emph{analytic Dyson-Schwinger equation} is
\[
G(x,L) = 1 + \text{sgn}(s)\sum_{k \geq 1}x^kG\left(x,\frac{d}{d(-\rho)}\right)^{1+sk}(e^{-L\rho}-1)F_{k}(\rho) \bigg|_{\rho=0}
\]
where 
\[
  F_k(\rho) = \sum_{i=-1}^{\infty} f_{k,i+1}\rho^i
\]
with the $f_{k,j}$ viewed as given (by physics).
\end{definition}
Systems can be treated similarly \cite[subsection 3.3.2]{Ymem}, but it is messier and not important for the present purposes.

This definition doesn't look much like the usual form of Dyson-Schwinger equations that a physicist would be familiar with.  The best way to see the connection between this definition and the more usual form is an example

\begin{example}\label{eg yukawa}
This example is from \cite{bkerfc}.
Consider graphs built from 
\[
\includegraphics{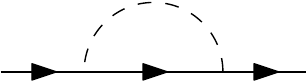}
\] 
inserted into itself in all possible ways.  This yields graphs such as
\[
\includegraphics{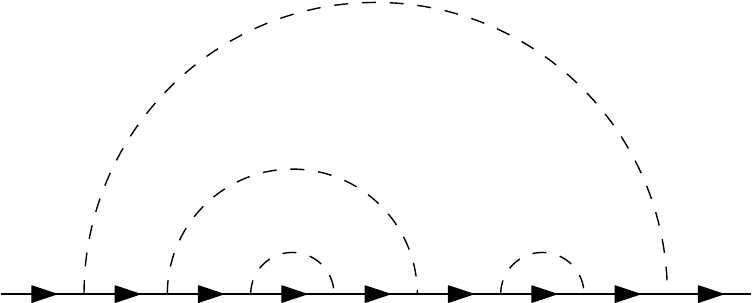}
\]
The tree structure of the insertions gives the combinatorial Dyson-Schwinger equation
\[
  X(x) = \mathbb{I} - x B_+
\left(\frac{1}{X(x)}\right)
\]

Applying Feynman rules we obtain an analytic Dyson-Schwinger equation in a form closer to what one would find in physics sources \cite{bkerfc}, specifically
\begin{equation}\label{physics dse}
G(x, L) = 1 - \frac{x}{q^2}\int d^4 k \frac{k \cdot q}{k^2 G(x,
  \log k^2)(k+q)^2} - \cdots \bigg|_{q^2 = \mu^2}
\end{equation}
where $L = \log(q^2/\mu^2)$, $q$ is the external momentum, and $\mu$ is a reference momentum.  See \cite{Ymem} Example 3.5 for further details.  Note that the structure of this equation is quite similar to the combinatorial Dyson-Schwinger equation: $B_+$ has become an integral operator and those bits of the Feynman integral given by the primitive;  the argument to $B_+$ has become the recursive appearance of $G(x,\log k^2)$.  Note also that renormalization is taken care of by a single subtraction at a fixed value of the momentum because the renormalization of the subgraphs is done recursively by the equation itself.

This does not, however, look much like the analytic Dyson-Schwinger equations defined in Definition \ref{def aDSE}. Example 3.7 of \cite{Ymem} begins with the above example, proceeds to expand $G(x,L)$ in $L$, convert logarithms to powers using $\frac{d^ky^{\rho}}{d\rho^k}  |_{\rho=0} = \log^k(y)$, swap the order of the operators, and thus obtains
\begin{equation}\label{my eg dse}
 G(x,L) = 1- x G\left(x,\frac{d}{d(-\rho)}\right)^{-1} (e^{-L\rho}-1)F(\rho) \big|_{\rho=0}
\end{equation}
where $F(\rho)$ is the Feynman integral of the primitive with the propagator we are inserting on regularized, and the integral evaluated at $q^2=1$.  The calculations take about a page but do not contain any notable subtleties.  

This example is the motivation for Definition \ref{def aDSE}.  The basic steps which converted \eqref{physics dse} to \eqref{my eg dse} could apply to any Dyson-Schwinger equation in its usual analytic form provided everything is sufficiently well-behaved analytically to swap the order of the integrals and derivatives as described.  To avoid such analytic nuisance I simply \emph{define} the analytic Dyson-Schwinger equations to be the formal outcome of this process, which is what is given by Definition \ref{def aDSE}.  In this context the expansion of $F(\rho)$ is viewed as given by physics because, as in the example, $F(\rho)$ is the integral for the primitive regularized at the insertion place and with the external momenta fixed.
\end{example}

It is tempting, when we are considering the number theory or algebraic geometry of Feynman graphs, to focus on graphs individually.  There is certainly a lot of interesting mathematics in each graph, and this will be the approach in Section \ref{sec cov}.  However, it is important that the mathematical investigation of Feynman graphs does not begin and end at the individual graph level.  The solutions to Dyson-Schwinger equations are physically meaningful quantities while individual graphs are largely not.  Thus the sums of graphs from Dyson-Schwinger equations yield the periods of greatest interest.

\subsection{Chord diagrams}

The main result of \cite{MYchord}, which was the topic of the author's talk at the conference for which these are the proceedings, is a solution to the particular Dyson-Schwinger equation
\begin{equation}\label{eq chord dse}
G(x,L) = 1 - xG\left(x,\frac{d}{d(-\rho)}\right)^{-1}(e^{-L\rho}-1)F(\rho) \bigg|_{\rho=0}
\end{equation}
as an expansion in $x$ and $L$ indexed by chord diagrams and with coefficients monomials in the $f_i$ where
\[
  F(\rho) = \sum_{i=-1}^{\infty} f_{i+1}\rho^i
\]
\cite{MYchord} is joint work with Nicolas Marie.

The objects we need are rooted connected chord diagrams.
\begin{definition}
  A \emph{perfect matching} of a finite set $S$ is a set of pairs of elements of $S$ such that every element of $S$ is in exactly one pair.
\end{definition}
\begin{definition}
  A \emph{rooted chord diagram} with $n$ chords is a perfect matching of $\{1,2,\ldots,2n\}$.  The pairs of the perfect matching are called \emph{chords} and are ordered by the order of their smaller elements.  The \emph{root chord} is the pair including $1$.
\end{definition}
To visualize a rooted chord diagram put the points $1,\ldots, 2n$ in counterclockwise order around a circle, let $1$ be the root vertex, and draw chords through the circle joining the pairs of points of the matching.  For example
\[
\includegraphics{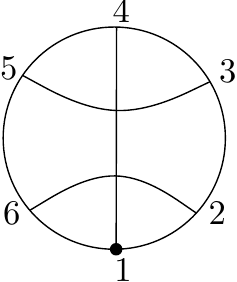}
\]
In the illustrations the root vertex will be marked with a dark dot.

The order of the chords is counterclockwise by their first appearance starting from the root. 

\begin{definition}
  The \emph{intersection graph} of a rooted chord diagram $C$ is the graph with one vertex for each chord of $C$ and an edge between vertices iff the chords cross, that is, iff the chords are pairs $\{a,b\}$, $\{c,d\}$ with $a<c<b<d$.  

The \emph{oriented intersection graph} of a rooted chord diagram $C$ is the intersection graph of $C$ with all edges oriented from the smaller chord to the larger chord.
\end{definition}

For example the oriented intersection graph of the previous chord diagram is
\[
\includegraphics{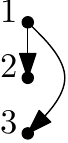}
\]

\begin{definition}
  A rooted chord diagram $C$ is \emph{connected} if the intersection graph is connected.
\end{definition}

For example the previous chord diagram is connected while
\[
\includegraphics{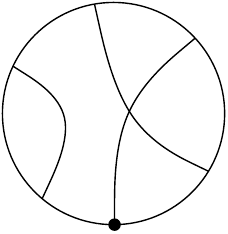}
\]
is not connected.

Next we need some more specialized definitions for our situation.
\begin{definition}
  A chord of a rooted connected chord diagram $C$ is called \emph{terminal} if the corresponding vertex of the oriented intersection graph of $C$ has no outgoing edges.
\end{definition}

 There is another order which is important in the following.  It is called the intersection order, and is defined recursively.

\begin{definition}
Let $C$ be a rooted connected chord diagram.  The \emph{intersection order} of the chords of $C$ is defined by the following procedure.  
\begin{itemize}
  \item The root chord of $C$ is the first chord in the intersection order.  
  \item Remove the root chord of $C$ and let $C_1$, \ldots $C_k$ be the connected components of the remaining chord diagram, ordered by the counterclockwise order of their first vertex.
  \item Order the chords of each of $C_1$, \ldots $C_k$ inductively in the intersection order.  Order the chords of $C$ with the root chord first followed by all chords of $C_1$ in intersection order, then all chords of $C_2$ in intersection order, and so on.
\end{itemize}
\end{definition}

For example the chords of the following chord diagrams are labelled in intersection order
\[
\includegraphics{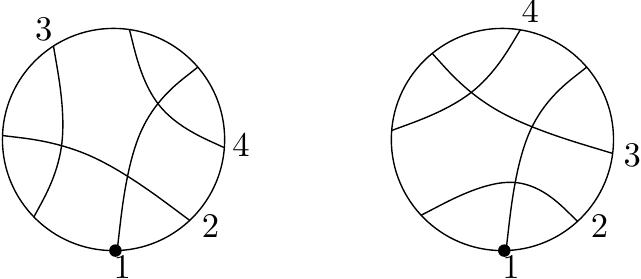}
\]
Note that the intersection order may or may not correspond to the counterclockwise order.

\begin{definition}
  Let $C$ be a rooted connected chord diagram with $n$ chords and let $f_0, f_1,\ldots $ be a countable set of indeterminates
\begin{enumerate}
  \item Let $T(C)=(i_1< i_2 < \cdots < i_k)$ be the list in increasing order of indices of terminal chords of $C$ in the intersection order.
  \item Let $b(C)$ be the first element of $T(C)$
  \item Let $\delta(C) = (\underbrace{0,\ldots,0}_{n-k \text{ times}}, i_2-i_1,i_3-i_2,\ldots,i_k-i_{k-1})$ be the list of differences of successive elements of $T(C)$ padded with $0$s so that $|\delta(C)| = n-1$.
  \item Let $f_C = \prod_{i\in \delta(C)}f_i$
\end{enumerate}
\end{definition}

The $f_C$ are what is needed to build up the monomials of the chord diagram expansion of the solution of \eqref{eq chord dse}.

The main result of \cite{MYchord} with Nicolas Marie is
\begin{theorem}
\[
G(x,L) = 1 - \sum_{i\geq 1}\frac{(-L)^i}{i!}\sum_{\substack{C \\ b(C) \geq i}}x^{|C|}f_{C}
f_{b(C)-i} 
\]
where the sum is over rooted connected chord diagrams with the indicated restriction,
solves the Dyson-Schwinger equation
\[
G(x,L) = 1 - xG\left(x,\frac{d}{d(-\rho)}\right)^{-1}(e^{-L\rho}-1)F(\rho) \big|_{\rho=0}
\]
where
\[
F(\rho)  = \frac{f_{0}}{\rho} + f_1 + f_2\rho + f_3\rho^2 + \cdots
\]
%  G(x,L) & = 1 - \sum_{n \geq 1} \gamma_n(x)L^n
\end{theorem}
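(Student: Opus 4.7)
The plan is to substitute the claimed formula for $G(x,L)$ into the right-hand side of \eqref{eq chord dse} and verify agreement coefficient-by-coefficient; since the Dyson-Schwinger equation uniquely determines its solution order by order in $x$ (by fixed-point iteration), a direct fixed-point check suffices.

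First I would convert the operator-theoretic right-hand side into a purely algebraic expression. The identity $p(d/d(-\rho))\rho^m|_{\rho=0} = (-1)^m m!\,[y^m]p(y)$, combined with the expansion
\[
(e^{-L\rho}-1)F(\rho) = \sum_{m\geq 0}\rho^m\sum_{i=1}^{m+1}\frac{(-L)^i}{i!}f_{m-i+1},
\]
reduces the application of $G(x,d/d(-\rho))^{-1}$ followed by evaluation at $\rho=0$ to a bilinear pairing between the coefficients of $G(x,y)^{-1}$ viewed as a polynomial in $y$ and the Taylor coefficients above. This isolates the whole problem to understanding $G^{-1}$ and to tracking a single Taylor index $m$.

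Second I would expand $G(x,L)^{-1}$ in chord-diagram language. The combinatorial shadow of the equation, $X = 1 - xB_+(X^{-1})$, produces in $X^{-1}$ ordered tuples (``forests'') of the connected objects enumerated by $X$, so I would expect $G^{-1}$ to admit an expansion indexed by ordered tuples $(C_1,\dots,C_k)$ of rooted connected chord diagrams whose $L$-dependence and $f$-monomials combine multiplicatively across the tuple. Third, I would invoke the recursive structure of the intersection order: removing the root chord $c_1$ of $C$ yields components $C_1,\dots,C_k$ (ordered counterclockwise by first vertex), and the intersection order of $C$ concatenates the root with the intersection orders of the $C_j$ with appropriate index shifts. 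This gives explicit formulas for $T(C)$, $b(C)$, $\delta(C)$, and $f_C$ in terms of the $T(C_j)$, $b(C_j)$, $\delta(C_j)$, $f_{C_j}$ and the sizes $|C_j|$. The verification then amounts to showing that each rooted connected chord diagram $C$ appearing on the left corresponds to exactly one contribution on the right, namely the tuple $(C_1,\dots,C_k)$ from $G^{-1}$ together with a choice of index $i$ selecting a power of $L$ and a single $f_{b(C)-i}$ factor supplied by $F(\rho)$ that plays the role of the root chord.

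The main obstacle, in my view, is Step 2: pinning down the right explicit expansion of $G^{-1}$, and in particular identifying the combinatorial index under which the extra $f_{b(C)-i}$ factor falls out cleanly after the Taylor pairing of Step 1. Once this alignment is correct, the remainder should be careful bookkeeping: the factorials and signs line up via the $(-L)^i/i!$ normalization, while the initial padding of zeros in $\delta(C)$ is precisely what is needed to absorb the non-terminal chords contributed by components $C_j$ preceding the first terminal chord of $C$ in the intersection order.
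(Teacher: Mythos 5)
Your Step 1 (turning $G(x,\tfrac{d}{d(-\rho)})^{-1}(e^{-L\rho}-1)F(\rho)|_{\rho=0}$ into a coefficient pairing, and noting that the equation determines $G$ uniquely order by order in $x$) is unproblematic, but the heart of the matter is exactly the part you defer, and the specific claims you make about how it will resolve are not right as stated. First, the weights are not multiplicative under root-chord removal: if removing the root of $C$ leaves components $C_1,\dots,C_k$, then $T(C)$ is indeed the union of the shifted $T(C_j)$, but $\delta(C)$ records differences of terminal-chord indices in the intersection order of the \emph{whole} diagram, so it contains cross-component differences (the gap between the last terminal chord of $C_j$ and the first terminal chord of $C_{j+1}$) as well as a global zero-padding of length $|C|$ minus the number of terminal chords; hence $f_C$ is not a product of the $f_{C_j}$ together with one extra factor, and the asserted expansion of $G^{-1}$ over tuples ``whose $L$-dependence and $f$-monomials combine multiplicatively'' is essentially the identity to be proved, not something you can assume from the combinatorial shadow $X=1-xB_+(X^{-1})$. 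Second, the correspondence you invoke between a connected diagram $C$ and the tuple $(C_1,\dots,C_k)$ plus one selected $f$-factor is many-to-one: given the components, there are many ways to interleave them around the circle and to attach the root chord (connectivity forces every component to cross the root), and the statistics $b(C)$, $\delta(C)$, hence the weights $f_Cf_{b(C)-i}$, depend on the chosen assembly. So matching ``exactly one contribution on the right'' to each $C$ on the left cannot be a weight-preserving bijection; one must show that the sum of the chord-diagram weights over all assemblies of a given tuple reproduces the product coming from $G^{-1}$ and the Taylor coefficients of $(e^{-L\rho}-1)F(\rho)$, with the correct signs from inverting the series.

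That counting problem is precisely where the actual proof (which this paper does not reproduce: it refers to \cite{MYchord}, whose body is the proof) needs its two auxiliary recurrences, one generalizing Stein's classical root-insertion recurrence for connected chord diagrams \cite{NWchord} and one routed through a rooted-tree construction. Your outline stops at ``the verification then amounts to showing\dots'', which is a restatement of the theorem rather than an argument; until you supply the analogue of the Stein-type recurrence (or some other device controlling the interleaving multiplicities together with the shift bookkeeping for $b(C)$ and $\delta(C)$), the proposal has a genuine gap at its central step.
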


The proof of the result is the body of \cite{MYchord}.  It involves two further recurrences, one of which generalizes a classical chord diagram recurrence of Stein \cite{NWchord} and one of which involves going through a rooted tree construction.  This result is interesting because it gives the Green function $G(x,L)$ as a sort of multivariate generating function for chord diagrams.  Expand
\[
G(x,L) = 1 - \sum_{n \geq 1} \gamma_n(x)L^n
\]
In this situation the analogue of the beta function of the theory is simply
\[
  \beta(x) = -2x\gamma_1(x)
\]
and hence is also given by an expansion over chord diagrams.  More physically realistic cases will be given by systems of Dyson-Schwinger equations and then the beta function will be a linear combination of the $\gamma_1$s for each Green function.  There are some hints that similar combinatorial expansions will hold for other single equation Dyson-Schwinger equations and then ultimately for systems, but as of yet there are no further results along those lines.

\subsection{Reduction to geometric series}

Let us now return to the more general form of the Dyson-Schwinger equation defined in Definition \ref{def aDSE}
\begin{equation}\label{eq aDSE redux}
G(x,L) = 1 + \text{sgn}(s)\sum_{k \geq 1}x^kG\left(x,\frac{d}{d(-\rho)}\right)^{1+sk}(e^{-L\rho}-1)F_{k}(\rho) \bigg|_{\rho=0}
\end{equation}
where 
\[
  F_k(\rho) = \sum_{i=-1}^{\infty} f_{k,i+1}\rho^i
\]
This form of Dyson-Schwinger equation was chosen for its amenability to algebraic and combinatorial analysis.  

We will need an important result from \cite{Ymem, kythesis}.  Writing the renormalization group equation in this language gives (see \cite[Section 4.1]{Ymem})
\begin{prop}\label{prop gammak rec}
  With $G(x,L)$ as above, and writing 
  \[
  G(x,L) = 1 +\text{sgn}(s) \sum_{n \geq 1} \gamma_n(x)L^n,
  \]we have
  \[
 k \gamma_k(x) = \gamma_1(x)\left(\text{sgn}(s) + |s|x\frac{d}{dx}\right)\gamma_{k-1}(x)
  \]
\end{prop}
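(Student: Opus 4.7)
The plan is to recast the statement as an immediate coefficient identity from the renormalization group equation
\[
\partial_L G(x,L) = \gamma_1(x)\left(\text{sgn}(s) + |s|\,x\frac{d}{dx}\right)G(x,L),
\]
and then derive this RGE from the analytic Dyson-Schwinger equation \eqref{eq aDSE redux}. Once the RGE is available, substitute $G(x,L) = 1 + \text{sgn}(s)\sum_{n\geq 1}\gamma_n(x) L^n$ into both sides and read off the coefficient of $L^{k-1}$: the left side contributes $\text{sgn}(s)\, k\gamma_k(x)$, and the right side contributes $\text{sgn}(s)\,\gamma_1(x)\bigl(\text{sgn}(s)+|s|x\tfrac{d}{dx}\bigr)\gamma_{k-1}(x)$ for $k\geq 2$ (the $k=1$ case being forced by the constant term of $G$). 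Cancelling the common factor of $\text{sgn}(s)$ yields the claimed recursion.

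The key technical input for deriving the RGE is the exponential shift identity
\[
P\!\left(\tfrac{d}{d(-\rho)}\right)\!\bigl(e^{-L\rho}\phi(\rho)\bigr) = e^{-L\rho}\, P\!\left(L + \tfrac{d}{d(-\rho)}\right)\!\phi(\rho),
\]
valid for any formal power series $P$ in one variable and any analytic $\phi(\rho)$ (verify it for $P(D)=D$ by the product rule, iterate to monomials, extend by linearity). Differentiating the DSE with respect to $L$ produces the integrand $-\rho\, e^{-L\rho} F_k(\rho)$, and because $F_k$ has only a simple pole at $\rho=0$ the product $\rho F_k(\rho)$ is analytic at the origin, so the shift identity applies and gives
\[
\partial_L G(x,L) = -\text{sgn}(s)\sum_{k\geq 1} x^k\, G\!\left(x, L + \tfrac{d}{d(-\rho)}\right)^{\!1+sk}\!\bigl(\rho F_k(\rho)\bigr)\bigg|_{\rho=0}.
\]

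The main obstacle is to reorganize this right-hand side into the factored form $\gamma_1(x)\bigl(\text{sgn}(s)+|s|x\tfrac{d}{dx}\bigr)G(x,L)$. Following Section~4.1 of \cite{Ymem}, the step is to combine the above with the expression one obtains by applying $x\tfrac{d}{dx}$ directly to the DSE, using $x\tfrac{d}{dx}[G^{1+sk}] = (1+sk)G^{sk}\cdot x\tfrac{d}{dx}G$ together with $x\tfrac{d}{dx}(x^k) = kx^k$ to produce the weight $1+sk$ on each side in matching ways. The combinatorial content, cleaner in the Hopf-algebraic derivation of \cite{Ymem,kythesis}, is that $\partial_L G$ corresponds to inserting one extra primitive into the tree expansion of $X(x)$, and in the $k$-th term of the DSE there are $1+sk$ copies of $G$ into which the insertion can occur; the operator $\text{sgn}(s)+|s|x\tfrac{d}{dx}$ is exactly the analytic image of this combinatorial count (one insertion into the explicit $G$ plus $sk$ insertions into the invariant charge, signs absorbed into $|s|$). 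Once this identification is complete the RGE is proved, and the coefficient extraction from the first paragraph finishes the proof.
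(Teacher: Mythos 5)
The paper itself does not prove this proposition: it is imported as the renormalization group equation ``written in this language,'' with a pointer to Section 4.1 of \cite{Ymem} (and \cite{kythesis}), so the relevant comparison is with that derivation. Your reduction of the statement to the RGE $\partial_L G = \gamma_1(x)\left(\text{sgn}(s)+|s|x\frac{d}{dx}\right)G(x,L)$ by reading off the coefficient of $L^{k-1}$ is correct, and your shift identity giving $\partial_L G = -\text{sgn}(s)\sum_{k\geq 1} x^k\, G\!\left(x, L+\frac{d}{d(-\rho)}\right)^{1+sk}\!\left(\rho F_k(\rho)\right)\big|_{\rho=0}$ is a legitimate formal manipulation (indeed $\rho F_k(\rho)$ is regular at $\rho=0$, and negative exponents $1+sk$ make sense as formal series in $x$). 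But that is the easy half; the entire content of the proposition is the RGE itself, and you never establish it.

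Your last paragraph, where the proof should be, consists of ``following Section 4.1 of \cite{Ymem}'' plus the assertion that $\text{sgn}(s)+|s|x\frac{d}{dx}$ is ``exactly the analytic image'' of counting the $1+sk$ insertion places. That assertion is precisely what has to be proved: from your differentiated equation it is not clear why $\partial_L G$ factors, for \emph{all} $L$, with the single coefficient $\gamma_1(x)$ as a global prefactor rather than as some more complicated $L$-dependent expression, and matching the weight $1+sk$ against $x\frac{d}{dx}$ acting on $x^k G^{1+sk}$ does not by itself produce that factorization. In \cite{Ymem,kythesis} the RGE is not obtained by massaging the analytic equation in this way; it follows from the Hopf-algebraic input --- the sub-Hopf-algebra/coproduct identity satisfied by solutions of the combinatorial Dyson-Schwinger equation together with the convolution structure of the renormalized Feynman rules --- and it is exactly this input that converts the insertion count into the operator $\text{sgn}(s)+|s|x\frac{d}{dx}$. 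Without carrying out that argument (or an honest inductive replacement for it at the level of the analytic equation \eqref{eq aDSE redux}), your proposal proves the proposition only modulo the very result of \cite{Ymem} that the paper cites, so there is a genuine gap at the central step.
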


The next step of \cite{Ymem, kythesis} was to observe that there exists unique $r_k, r_{k,i} \in \mathbb{R}$ for $k\geq 1$, $1\leq i < k$ such that
\begin{align*}
  & \sum_{k \geq 1}x^kG\left(x,\frac{d}{d(-\rho)}\right)^{1+sk}(e^{-L\rho}-1)F_{k}(\rho) \bigg|_{\rho=0} \\
  & = \sum_{k \geq 1}x^kG\left(x,\frac{d}{d(-\rho)}\right)^{1+sk}(e^{-L\rho}-1)\left(\frac{r_k}{\rho(1-\rho)} + \sum_{1\leq i < k}\frac{r_{k,i}L^i}{\rho}\right) \bigg|_{\rho=0}
\end{align*}
and then use the $r_k$ and $r_{k,1}$ to build a series $P(x)$, which is the mysterious input to the differential equations studied in \cite{vBKUY, vBKUY2}.  The most unsatisfying thing about this entire framework for Dyson-Schwinger equations is the lack of understanding of $P(x)$. The chord diagram expansion is meant, among other things, as a contribution to the understanding of $P$ since in the case it applies to, it also gives $P$ as an explicit expansion over chord diagrams.

For a more general understanding we could first try to simply expand out and calculate the first few $r_k$ and $r_{k,i}$ in some examples.  Here problems begin for the most prosaic of reasons: an error.  Example 6.2 of \cite{Ymem} gives $r_k$ for $k\leq 5$ and $r_{k,i}$ for $k\leq 4$.  Unfortunately it is wrong.  The following example corrects this error.

\begin{example}
This example corrects Example 6.2 of \cite{Ymem}.  Note that the conventions of this paper differ from those of \cite{Ymem} in two ways, first the indexing of the $f_i$ has been shifted by $1$ to make the chord diagram construction simpler, and second the sign of $s$ has been swapped.

Consider the case with $s=-2$ and a single $B_+$ at order $k=1$.  This is the case we now fully understand in terms of chord diagrams.  Write 
\[
F(\rho) = \sum_{j=-1}^{\infty} f_{j+1}\rho^j
\]
Then
\begin{align*}
  r_1 & = f_0 \\
  r_2 & = f_0f_1-f_0^2 \\
  r_{2,1} & = 0 \\
  r_3 & = -4f_0^2f_1+3f_2f_0^2+f_0f_1^2 \\
  r_{3,1} & = 0 \\
  r_{3,2} & = 0 \\
  r_4 & = 11f_2f_0^2f_1-9f_0^2f_1^2-18f_2f_0^3+f_0f_1^3+15f_3f_0^3 \\
  r_{4,1} & =0 \\
  r_{4,2} & =0 \\
  r_{4,3} & =0 \\
  r_{5} & = 86f_3f_0^3f_1-120f_3f_0^4-16f_0^2f_1^3+f_0f_1^4+30f_2^2f_0^3+105f_0^4f_4\\
  & \quad -112f_2f_0^3f_1+26f_2f_0^2f_1^2
\end{align*}
The problem with the original computation was a sign error in the program.  This new computation has been verified in two ways; first the example was independently computed by Erik Panzer, and second the $\gamma_i$ calculated by the program are now verified to satisfy Proposition \ref{prop gammak rec}.
\end{example}

The calculation brings up a very interesting point.  Only the $r_i$ are needed.  This is a very welcome development. The intuition of the rewriting with the $r_i$ and $r_{i,k}$ is that we are replacing the Mellin transforms of the primitives, $F_k(\rho)$, with geometric series; the sum term was simply a higher order hack to make things work out.  What this author was not able to see at the time is that the hack is simply not needed.  In fact any function of the form
\[
 g_k(\rho) = \frac{1}{\rho} + O(\rho^0)
\]
can take the place of $1/(\rho(1-\rho))$ with no need for any $r_{k,i}$\footnote{Thanks to a referee for pointing out that this is the correct level of generality for this result and for pointing out the tidier proof included here.}.  For the purposes of the program of \cite{Ymem, kythesis} the most useful $g_k(\rho)$ are $g_k(\rho) = 1/(\rho(1-\rho))$ or $g_k(\rho) = 1/(\rho(1+\rho))$ since the next steps of \cite{Ymem, kythesis} use $1 \pm \rho^2 g_k(\rho) = \rho g_k(\rho)$ to simultaneously keep the order of the ultimate differential equation low and the expression for $P$ relatively simple.
 
\begin{theorem}\label{thm geometric}
  Let $s$ be an integer.
  Let $G(x,L) = 1 +\text{sgn}(s) \sum_{n \geq 1} \gamma_n(x)L^n$ solve the Dyson-Schwinger equation \eqref{eq aDSE redux}.  Let $g_k(\rho) = \frac{1}{\rho} + O(\rho^0)$
  Then there exists unique $r_k \in \mathbb{R}$ for $k\geq 1$ such that
\begin{align*}
  & \sum_{k \geq 1}x^kG\left(x,\frac{d}{d(-\rho)}\right)^{1+sk}(e^{-L\rho}-1)F_{k}(\rho) \bigg|_{\rho=0} \\
  & = \sum_{k \geq 1}x^kG\left(x,\frac{d}{d(-\rho)}\right)^{1+sk}(e^{-L\rho}-1)r_kg_k(\rho) \bigg|_{\rho=0}
\end{align*}
  In particular, the reduction to geometric series in \cite{Ymem, kythesis} can be a pure reduction to geometric series, without fudge factors at higher powers in $L$.
\end{theorem}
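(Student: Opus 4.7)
The plan is to reformulate the identity in terms of the solution of a modified Dyson-Schwinger equation and then apply Proposition~\ref{prop gammak rec}. Let $\tilde G(x, L)$ denote the unique formal series solving \eqref{eq aDSE redux} with each $F_k(\rho)$ replaced by $r_k g_k(\rho)$, where the $r_k$ are undetermined constants. The first observation is that the identity of the theorem is equivalent to $\tilde G = G$: if $\tilde G = G$, then substituting $G$ for $\tilde G$ on both sides of the modified DSE shows the two sums of the theorem both equal $\text{sgn}(s)(G-1)$; conversely, the identity says exactly that $G$ solves the modified DSE, whose solution is unique order by order in $x$.

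Next, Proposition~\ref{prop gammak rec} applied to both $G$ and $\tilde G$ --- each of which satisfies a DSE of the required form --- shows that the whole series is determined by its coefficient $\gamma_1(x)$ of $L^1$. So it suffices to choose the $r_k$ so that $\tilde\gamma_1(x) = \gamma_1(x)$ in $\mathbb{R}[[x]]$. Abbreviating $\Phi_k(H, \phi) := H(x, d/d(-\rho))^{1+sk}(e^{-L\rho}-1)\phi(\rho)\big|_{\rho=0}$ and using its linearity in $\phi$, extracting the coefficient of $L^1$ from the modified DSE gives
\[
\text{sgn}(s)\, \tilde\gamma_1(x) = \sum_{k \geq 1} x^k r_k \tilde M_k(x), \qquad \tilde M_k(x) := [L^1]\Phi_k(\tilde G, g_k).
\]
The key elementary computation I would then carry out is $[x^0]\tilde M_k = -1$: at $x=0$ the operator $\tilde G(0, d/d(-\rho))^{1+sk}$ is the identity (since $\tilde G(0, L) = 1$), and $-\rho g_k(\rho)$ evaluated at $\rho = 0$ picks out the coefficient of $\rho^{-1}$ in $g_k$, which is $1$, yielding $-1$. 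Moreover, $\tilde G$ at order $x^m$ depends only on $r_1, \ldots, r_m$, so $[x^{n-k}]\tilde M_k(x)$ depends only on $r_1, \ldots, r_{n-k}$.

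Matching the $x^n$ coefficient of both sides of $\tilde\gamma_1 = \gamma_1$ therefore isolates $r_n$ with coefficient $-1$, the remaining terms being a polynomial in $r_1, \ldots, r_{n-1}$ and the given data. This is a triangular linear system, uniquely solvable for $r_n \in \mathbb{R}$ by induction on $n$. The main obstacle is pinning down the triangular structure and the dependence of $\tilde G$ on $(r_k)$ cleanly --- in particular the nondegeneracy $[x^0]\tilde M_k = -1$, which is exactly what the normalization $g_k(\rho) = 1/\rho + O(\rho^0)$ guarantees and which is the heart of why no higher-$L$ ``fudge factors'' are needed. Once this triangular structure is in place, existence and uniqueness of the $r_k$ follow and the theorem is proved.
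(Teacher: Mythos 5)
Your argument is correct, and its engine is the same as the paper's: Proposition \ref{prop gammak rec} holds for any choice of the analytic input, so both the original and the modified equation are determined by $\gamma_1$ alone, and the $r_k$ are then fixed order by order in $x$ by a triangular structure. Where you differ is in the bookkeeping. The paper first specializes to $g_k(\rho)=1/(\rho(1-\rho))$ (deducing the general case by applying that special case twice) and pins the $r_k$ by matching $P(x)=-\gamma_1(x)-2\gamma_2(x)$, the version of the right-hand side of \eqref{eq aDSE redux} with $(e^{-L\rho}-1)$ replaced by $(\rho-\rho^2)$; on the modified side this quantity is literally $\sum_{k\geq 1} r_k x^k$, and the recursion of Proposition \ref{prop gammak rec} turns $P=P'$ into the inductive conclusion $\gamma_{1,i}=\gamma_{1,i}'$, yielding as a by-product the explicit formula $r_i=-\gamma_{1,i}-\sum_{j=1}^{i-1}\gamma_{1,j}(\text{sgn}(s)+|s|(i-j))\gamma_{1,i-j}$ in terms of the original theory. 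You instead keep $g_k$ general throughout, reformulate the identity as $\tilde G=G$ via order-by-order uniqueness of the modified solution, and extract the $L^1$ coefficient to obtain a triangular system whose nondegeneracy is exactly the residue normalization $\rho g_k(\rho)=1+O(\rho)$ (your computation $[x^0]\tilde M_k=-1$). This buys a one-pass argument valid for arbitrary $g_k$ and makes transparent why no higher-$L$ ``fudge factors'' can be needed, at the cost of a less explicit description of the $r_k$; both routes are complete and rest on the same key lemma.
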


\begin{proof}
First note that it suffices to prove the result for $g_k = 1/(\rho(1-\rho))$ since the full result then follows by applying this smaller result twice, once with the $F_k$ and then again with the general $g_k$.

Next note that Proposition \ref{prop gammak rec} holds independently of the values of the $f_{k,i}$, and hence in particular it holds when $F_k(\rho) = r_k/(\rho(1-\rho))$ for any $r_k\in \mathbb{R}$.  
Write $\gamma_k'$ for the $\gamma_k$ with these particular $F_k(\rho)$.

Write
\begin{align*}
P(x) & = \sum_{k\geq 1}x^kG\left(x, \frac{d}{d(-\rho)}\right)^{1+sk}(\rho - \rho^2)F_k(\rho)\bigg|_{\rho=0} \\
P'(x) & = \sum_{k\geq 1}x^kG\left(x, \frac{d}{d(-\rho)}\right)^{1+sk}(\rho - \rho^2)\frac{r_k}{\rho(1-\rho)}\bigg|_{\rho=0}
\end{align*}
Then $P'(x) = \sum_{k \geq 1}r_kx^k$ and from \eqref{eq aDSE redux}
\[
P(x) = -\gamma_1(x)-2\gamma_2(x) \quad \text{and} \quad P'(x) = -\gamma_1'(x)-2\gamma_2'(x).
\]
Choose the $r_k$ so that $P(x)=P'(x)$ as series expansions.  Then by Proposition \ref{prop gammak rec}
\begin{align*}
& -\gamma_1(x) - \gamma_1(x)\left(\text{sgn}(s) + |s|x\frac{d}{dx}\right)\gamma_1(x) \\
& = P(x) = -\gamma_1'(x) - \gamma_1'(x)\left(\text{sgn}(s) + |s|x\frac{d}{dx}\right)\gamma_1'(x)
\end{align*}
Taking this coefficient by coefficient we get
\begin{align*}
& -\gamma_{1,i} - \sum_{j=1}^{i-1}\gamma_{1,j}(\text{sgn}(s) + |s|(i-j))\gamma_{1,i-j}\\
& = r_i = -\gamma_{1,i}' - \sum_{j=1}^{i-1}\gamma_{1,j}'(\text{sgn}(s) + |s|(i-j))\gamma_{1,i-j}'
\end{align*}
Therefore $\gamma_{1,1} = r_1 = \gamma_{1,1}'$ and inductively $\gamma_{1,i}=\gamma_{1,i}'$ for all $i$.  The result follows.
\end{proof}

This theorem is conceptually important because it says that rather than performing an ad-hoc transformation what we are doing is converting our problem from the original theory with its analytically complicated primitives, to a modified theory with new primitives each of which is analytically simply a geometric series.  In this way the analytic complexity of the Dyson-Schwinger equation has been unwound into the combinatorics.

\section{Denominator reduction and special changes of variables}\label{sec cov}

This section will investigate a different use of combinatorial interpretations in the study of periods in quantum field theory.  Here we will consider the values of individual graphs, specifically primitive massless scalar $\phi^4$ graphs.

\subsection{Polynomials and denominator reduction}

For us a graph may have multiple edges and may have loops in the sense of graph theory.
For any graph $G$, $E(G)$ represents the set of edges of $G$ and $V(G)$ represents the set of vertices of $G$.  For $G$ connected $\ell(G)$ is the first Betti number of $G$, equivalently the number of independent cycles in $G$.  $G/e$ denotes $G$ with the edge $e$ contracted, and $G\setminus e$ denotes $G$ with the edge $e$ deleted.

\begin{definition}\label{def K}
Let $K$ be a connected 4-regular graph.  Let $G$ be $K$ with one vertex removed and suppose $G$ is also connected.  Then we say $G$ is a \emph{4-point graph in $\phi^4$}.  
\end{definition}

\begin{definition}
  Let $G$ be a $4$-point graph in $\phi^4$.  $G$ is \emph{primitive} if every proper subgraph $\gamma$ of $G$ with at least one edge satisfies $|E(\gamma)|>2\ell(\gamma)$
\end{definition}  
Note that this is primitivity in the renormalization Hopf algebra because the condition in $\phi^4$ for $\gamma$ to be a subdivergence is $|E(\gamma)| \leq 2\ell(\gamma)$.

\begin{definition}
  A graph $H$ is \emph{internally k-edge connected} if for every set $S$ of $k-1$ edges of $H$ either
  \begin{itemize}
    \item $H \setminus S$ is connected (so $S$ is not a cut set), or
    \item $H \setminus S$ has two components, one of which is an isolated vertex.
  \end{itemize}
\end{definition}

\begin{prop}
  Let $G$ be a $4$-point graph in $\phi^4$.
  $G$ is primitive iff $K$ is internally 6-edge connected, where $K$ is as  in Definition \ref{def K}.
\end{prop}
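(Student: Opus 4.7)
My plan is to prove both directions by contrapositive, using a single identity coming from 4-regularity. For any $U \subseteq V(K)\setminus\{v\}$, let $\gamma_U = K[U]$ denote the edge-induced subgraph of $K$ on $U$, which is automatically a subgraph of $G$. Degree-counting over $U$ gives $4|U| = 2|E(\gamma_U)| + s(U)$ with $s(U) := |E_K(U, V(K)\setminus U)|$, and combined with $\ell(\gamma_U) = |E(\gamma_U)| - |U| + c(\gamma_U)$ this yields
\[
|E(\gamma_U)| - 2\ell(\gamma_U) \;=\; \tfrac{1}{2}s(U) - 2\,c(\gamma_U).
\]
Thus $\gamma_U$ is a subdivergence iff $s(U) \le 4\,c(\gamma_U)$, and $s(U)$ is always even.

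For ``$K$ internally 6-edge connected $\Rightarrow$ $G$ primitive'' (contrapositive), take a subdivergence $\gamma$ of $G$. By pigeonhole on the inequality $|E(\gamma)| \le 2\ell(\gamma)$, some connected component of $\gamma$ carrying at least one edge is itself a subdivergence, so we may assume $\gamma$ connected. If $V(\gamma) \subsetneq V(G)$, replace $\gamma$ by $K[V(\gamma)]$ (each added chord within a connected subgraph raises both $|E|$ and $\ell$ by one, preserving $|E| \le 2\ell$); then $\gamma = K[U]$ with $s(U)\le 4$. When $|U|\ge 2$, the cut edges together with $5 - s(U)$ extra edges chosen inside $V(K)\setminus U$ form a 5-edge set $S$ for which $K\setminus S$ has the component $K[U]$ of size $\ge 2$ together with at least one further non-trivial component (or at least three components total), violating the definition. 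When $|U|=1$, the single vertex $u\in U$ must carry a loop in $K$, and $S$ consisting of the two non-loop edges at $u$ plus any three edges of $K-u$ leaves $\{u\}$ with its loop in $K\setminus S$, so $\{u\}$ is not a bare isolated vertex and the definition is again violated. The remaining sub-case, a \emph{spanning} subdivergence with $V(\gamma) = V(G)$, forces $L_v \ge 1$ by the identity applied at $U = V(G)$, and the analogous construction at $v$ (two non-loop edges of $v$ plus three edges of $K-v$) exhibits $\{v\}$ as a non-isolated component of $K\setminus S$.

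For the converse direction (contrapositive), fix a 5-edge set $S$ witnessing that $K$ is not internally 6-edge connected. Take any component $C$ of $K\setminus S$ not containing $v$ and set $U = V(C)$; then $s(U) \le |S| = 5$, so $s(U) \le 4$ by parity, and the identity makes $K[U]$ a proper subdivergence of $G$ (proper since $v \notin U$) provided $K[U]$ has at least one edge. The only obstruction is when every non-$v$ component of $K\setminus S$ is a bare isolated vertex in $K$; a short bookkeeping argument combining $|S|=5$, 4-regularity, and connectedness of $K$ then pins this down to exactly two such vertices $u_1, u_2$ with three edges between them in $K$, and $K[\{u_1,u_2\}]$ is itself a (triple-edge) subdivergence of $G$.

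The main obstacle is reconciling two notions of ``isolated'': the combinatorial subdivergence condition counts a loop as an edge contributing to both $|E|$ and $\ell$, whereas the graph-theoretic definition of an isolated vertex in internal edge-connectivity requires zero incident edges. This mismatch forces the $|U|=1$, spanning, and all-bare-isolated-non-$v$-component sub-cases to be handled separately via loop-based constructions, but each is then a direct calculation; everything else follows immediately from the identity.
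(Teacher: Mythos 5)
Your proof is correct and is essentially the paper's own argument: your identity $|E(\gamma_U)|-2\ell(\gamma_U)=\tfrac12 s(U)-2c(\gamma_U)$ is exactly the degree-count/Euler-formula step the paper invokes, the evenness of $s(U)$ plays the role of the paper's remark that a 4-regular graph has no odd cut sets, and in both directions the component avoiding the deleted vertex $v$ supplies, respectively, the internal cut or the subdivergence. The remaining differences are organizational rather than substantive: you pad cuts to 5-edge witnesses instead of first reducing by parity to an internal 4-edge cut, and you track loop degeneracies the paper leaves implicit, though note that your step ``$K[U]$ is proper since $v\notin U$'' (like the paper's corresponding step) silently excludes the pathological case of a loop at $v$, where $U$ could be all of $V(G)$; this case does not occur for completions of $\phi^4$ graphs.
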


The idea here is simply that an internal 4-edge cut is exactly a subdivergence.
\begin{proof}
  Let $v$ be the vertex of $K$ removed to create $G$.
  Since all vertices of $K$ have even degree $K$ has no cut sets of odd size.  Thus $K$ not internally 6-edge connected is equivalent to there being an internal 4-edge cut.

  \medskip

  Suppose $K$ is not internally 6-edge connected.  Since $K$ is 4-regular, for any set $S$ of four edges of $K$, $K\setminus S$ can have at most one isolated vertex as a component.  Furthermore, each connected component of $K\setminus S$ viewed as a subgraph of $K$ must be incident to an even number of edges of $S$, as if not then the set of vertices of $K\setminus S$ would be incident to an odd number of edges in $S$ contradicting 4-regularity.   

Thus, by hypothesis, there is a set $S$ of four edges of $K$ such that $K\setminus S$ has at least two components neither of which are isolated vertices.  Call these two components $K_1$ and $K_2$.  

By Euler's formula $E(K_i) \leq 2\ell(K_i)$.  At least one of $K_1$ or $K_2$ does not include $v$, but then this is a divergent subgraph of $G$ and so $G$ is not primitive.
\medskip

Suppose $G$ is not primitive.  Then there is a proper subgraph $\gamma$ of $G$ with at least one edge and with $|E(\gamma)| \leq 2\ell(\gamma)$.  Consider $\gamma$ as a subgraph of $K$.  By 4-regularity and Euler's formula there are at most 4 edges connecting $\gamma$ to the rest of $K$.  These 4 edges form an edge cut of $K$.  It is an internal cut since $\gamma$ has at least one edge and is a proper subgraph of $G$.  Thus $K$ is not internally 6-edge connected.
\end{proof}

\medskip

With those observations out of the way, let's move to some graph polynomials.

\begin{definition}
  Let $G$ be a graph.  Take a variable $a_e$ for each edge $e$ of $G$.  Then
  \[
  \Psi_G = \prod_{T} \sum_{e\not\in T}a_e
  \]
  where the sum is over all spanning trees of $G$.  Call $\Psi_G$ the \emph{Kirchhoff polynomial} of $G$.
\end{definition}

Note this is the dual definition to what is sometimes called the Kirchhoff polynomial.

For example if 
\[
 G = \includegraphics{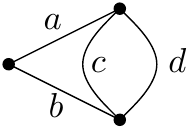}
\]
then $\Psi_G = cd + (a+b)(c+d)$.  This example is not primitive, but was chosen for its smallness.

The period of $G$, its residue as a Feynman integral, is as follows.
\begin{definition}\label{def feynman integral}
  Let $G$ be a primitive $4$-point graph in $\phi^4$.  The \emph{period} of $G$ is
  \[
  \int_{a_i\geq 0} \frac{\sum_{i=1}^{|E(G)|} (-1)^i da_1\wedge \cdots \widehat{da_i} \cdots \wedge da_{|E(G)|}}{\Psi_G^2}
  \]
\end{definition}
This integral converges since $G$ is primitive.  Really this integral should be considered as a projective integral over $a_i \geq 0$ in $\mathbb{PR}^{|E(G)|-1}$, but the above definition keeps things elementary and suffices for our needs.

By the matrix tree theorem $\Psi_G$ can also be expressed as a determinant.

\begin{prop}\label{prop matrix tree}
  Orient $G$ and let $E$ be the $|V(G)|\times |E(G)|$ incidence matrix of $G$.  Let $\widetilde{E}$ be $E$ with any one row removed and let $\Lambda$ be the diagonal matrix of the $a_e$ in the same order as the columns of $E$.  Then
  \[
  \Psi_G = \det\begin{bmatrix} \Lambda & \widetilde{E}^T \\ -\widetilde{E} & 0\end{bmatrix}
  \]
and in particular does not depend on the choices used to construct the matrix.
\end{prop}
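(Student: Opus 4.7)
The plan is to compute the determinant via Schur complement and then recognize the result combinatorially via Cauchy--Binet plus the standard fact identifying nonvanishing maximal minors of an oriented incidence matrix with spanning trees.

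First I would regard the $a_e$ as indeterminates so that $\Lambda$ is invertible over $\mathbb{Q}(a_e : e \in E(G))$, and apply the block-matrix identity
\[
\det\begin{bmatrix} \Lambda & \widetilde{E}^T \\ -\widetilde{E} & 0\end{bmatrix} = \det(\Lambda)\,\det\!\bigl(0 - (-\widetilde{E})\Lambda^{-1}\widetilde{E}^T\bigr) = \Bigl(\prod_{e\in E(G)} a_e\Bigr)\det\!\bigl(\widetilde{E}\,\Lambda^{-1}\widetilde{E}^T\bigr).
\]
This reduces the problem to evaluating the determinant of a symmetric matrix of size $(|V(G)|-1)\times(|V(G)|-1)$ formed as a sandwich of the reduced incidence matrix around the diagonal matrix $\Lambda^{-1}$.

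Second, I would apply the generalized Cauchy--Binet formula for a product $X D Y$ with $D$ diagonal, which expands $\det(\widetilde{E}\,\Lambda^{-1}\widetilde{E}^T)$ as a sum over column subsets $S \subseteq E(G)$ with $|S| = |V(G)|-1$:
\[
\det\!\bigl(\widetilde{E}\,\Lambda^{-1}\widetilde{E}^T\bigr) = \sum_{\substack{S \subseteq E(G) \\ |S|=|V(G)|-1}} \det(\widetilde{E}_S)^2 \prod_{e \in S}\frac{1}{a_e},
\]
where $\widetilde{E}_S$ is the square submatrix of $\widetilde{E}$ with columns indexed by $S$. Here I invoke the classical lemma that $\det(\widetilde{E}_S) = \pm 1$ when $S$ is the edge set of a spanning tree of $G$ and $0$ otherwise; the nonvanishing direction uses that a spanning tree has no cycles (so the columns are linearly independent), and a direct induction on tree size via a leaf edge shows the value is $\pm 1$, while the vanishing direction uses that any cycle among the chosen edges produces a linear dependence among the corresponding columns of $\widetilde{E}$.

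Combining these two steps:
\[
\det\begin{bmatrix} \Lambda & \widetilde{E}^T \\ -\widetilde{E} & 0\end{bmatrix} = \Bigl(\prod_{e\in E(G)} a_e\Bigr) \sum_{T \text{ spanning tree}} \prod_{e \in T}\frac{1}{a_e} = \sum_{T} \prod_{e \notin T} a_e = \Psi_G,
\]
which both proves the formula and automatically shows the independence from the choice of orientation and the choice of which row of $E$ to delete, since the right-hand side depends on neither. I do not anticipate a serious obstacle: the only point requiring a little care is tracking the sign in the Schur complement (the two minus signs inside the block matrix cancel), and checking that the $\pm 1$ ambiguity in $\det(\widetilde{E}_S)$ is harmless because it is squared.
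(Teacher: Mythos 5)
Your proposal is correct and follows essentially the same route as the paper's proof: invert $\Lambda$ over the rational function field, take the Schur complement, expand $\det(\widetilde{E}\Lambda^{-1}\widetilde{E}^T)$ by Cauchy--Binet, and use the standard fact that $\det(\widetilde{E}_S)=\pm 1$ exactly when $S$ is a spanning tree. The only difference is that you sketch the leaf-induction proof of that last fact, which the paper simply cites as the matrix-tree theorem.
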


\begin{proof}
  There are a number of ways to approach this proof (see for example Proposition 21 of \cite{Brbig}) but in the end it always comes down to the matrix-tree theorem.  The shortest proof the author knows is as follows (see \cite{VY}).  

$\Lambda$ is invertible, so we
can calculate the determinant using the Schur complement:
\[
  \det\begin{bmatrix} \Lambda & \widetilde{E}^T \\ -\widetilde{E} & 0\end{bmatrix} = a_1\cdots a_{|E(G)|} \det(0 - (-\widetilde{E}\Lambda^{-1}\widetilde{E}^T ))
= a_1\cdots a_{|E(G)|} \det(\widetilde{E}\Lambda^{-1}\widetilde{E}^T)
\]
For an $m\times n$ matrix $A$ and $S\subseteq \{1,\cdots,n\}$ let $A_S$ be the submatrix of $A$ given by columns indexed by $S$.  By the Cauchy-Binet formula
\begin{align*}
 \det(\widetilde{E}\Lambda^{-1}\widetilde{E}^T) & = \sum_{\substack{S\subseteq E(G)\\ |S|=|V(S)|-1}}\det(\widetilde{E}_S) \det((\Lambda^{-1}\widetilde{E}^T)^T_S) \\
 & = \sum_{\substack{S\subseteq E(G)\\ |S|=|V(S)|-1}} \prod_{i\in S}\frac{1}{a_i} \det(\widetilde{E}_S)^2
\end{align*}
The matrix tree theorem says that $\det\widetilde{E}_S = \pm 1$ if $S$ is a spanning tree of $G$ and is $0$ otherwise.  The result follows.
\end{proof}

A key part of Brown's approach \cite{Brbig} to calculating the periods of these graphs is denominator reduction.  First we need some polynomials built from the matrix of the previous proposition.

\begin{definition}
  Let $G$, $\Lambda$ and $\widetilde{E}$ be as in Proposition \ref{prop matrix tree}.  Let 
  \[
  M = \begin{bmatrix} \Lambda & \widetilde{E}^T \\ -\widetilde{E} & 0\end{bmatrix}.
  \]
  Let $I$ and $J$ and $K$ be sets of edge indices. Let $M(I,J)$ be the matrix obtained from $M$ by removing the rows indexed by $I$ and the columns indexed by $J$.  Suppose $|I|=|J|$.  Then the polynomial
  \[
    \Psi^{I,J}_{G,K} = \det M(I,J) |_{\substack{a_i=0\\i\in K}}
  \]
  is called a \emph{Dodgson polynomial}.
\end{definition}

Dodgson polynomials satisfy a contraction-deletion relation.

\begin{prop}\label{prop contr del}
  For $\ell\not\in I\cup J\cup K$
  \[
  \Psi^{I,J}_{G,K} = \Psi^{I\ell,K\ell}_{G,K}a_\ell + \Psi^{I,J}_{G,K\ell} 
  \]
  and
  \[
  \Psi^{I\ell,J\ell}_{G,K} = \Psi^{I,J}_{G\setminus \ell,K} \quad  \Psi^{I,J}_{G,K\ell} = \Psi^{I,J}_{G/\ell,K}
  \]
\end{prop}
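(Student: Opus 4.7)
The three identities all rest on a single observation: in the matrix
\[
M = \begin{bmatrix} \Lambda & \widetilde{E}^T \\ -\widetilde{E} & 0\end{bmatrix},
\]
the variable $a_\ell$ appears exactly once, at position $(\ell,\ell)$ of the $\Lambda$-block, and the columns of $\widetilde{E}$ (and rows of $\widetilde{E}^T$) are indexed by edges. My plan is to treat each identity in turn by exploiting this.

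For the first identity (I read the exponent as $\Psi^{I\ell,J\ell}_{G,K}$; the ``$K\ell$'' in the statement appears to be a typo), I would note that because $\ell \notin I\cup J\cup K$, the $(\ell,\ell)$-entry $a_\ell$ survives both the row/column removal prescribed by $I,J$ and the specialization $a_i=0$, $i\in K$. Hence $\Psi^{I,J}_{G,K}$ is a polynomial of degree at most one in $a_\ell$. Setting $a_\ell=0$ gives $\Psi^{I,J}_{G,K\ell}$ by definition, and Laplace expansion along the row (or column) containing $a_\ell$ shows that the coefficient of $a_\ell$ is the determinant of $M(I\ell,J\ell)$ restricted by $a_i=0$ for $i\in K$, which is $\Psi^{I\ell,J\ell}_{G,K}$. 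The cofactor sign $(-1)^{\ell+\ell}=1$ is trivial; any remaining sign from the relative positions of $\ell$ within $I$ and $J$ is absorbed into whatever fixed sign convention $\Psi^{I,J}_{G,K}$ is taken with.

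For the deletion identity $\Psi^{I\ell,J\ell}_{G,K} = \Psi^{I,J}_{G\setminus\ell,K}$, I would simply observe that striking row $\ell$ and column $\ell$ of $M$ removes the diagonal entry $a_\ell$ of $\Lambda$ together with the row of $-\widetilde{E}$ and the column of $\widetilde{E}^T$ corresponding to edge $\ell$. What is left is exactly the matrix of Proposition \ref{prop matrix tree} associated with $G\setminus\ell$ (with the same vertex row removed). Removing the remaining rows $I$, columns $J$ and specializing $a_i=0$ for $i\in K$ then gives $\Psi^{I,J}_{G\setminus\ell,K}$ directly from the definition.

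For the contraction identity $\Psi^{I,J}_{G,K\ell} = \Psi^{I,J}_{G/\ell,K}$, I would run the Schur-complement--Cauchy-Binet expansion from the proof of Proposition \ref{prop matrix tree}, applied to $M(I,J)$ with $a_\ell=0$. In that spanning-tree sum, setting $a_\ell=0$ annihilates every term corresponding to a subset $S$ containing $\ell$, leaving only those $S$ omitting $\ell$; after factoring out $a_\ell$ from the leading $a_1\cdots a_{|E(G)|}$ this corresponds to forcing $\ell$ into the chosen spanning tree/forest, and such configurations are in bijection (via contraction of $\ell$) with the analogous configurations for $G/\ell$. The bookkeeping of signs in the first identity is the only step that requires mild care, and it is the standard Dodgson-polynomial sign calculation, so I do not expect any substantive obstacle beyond that.
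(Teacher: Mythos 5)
Your overall strategy is the same as the paper's, which simply says the identities ``follow from the form of the matrix'' and cites Brown--Schnetz; and two thirds of your write-up does fill in those details correctly. You are right that the superscript $K\ell$ in the first displayed identity is a typo for $J\ell$, the Laplace expansion in the single entry $a_\ell$ (which survives since $\ell\notin I\cup J\cup K$) correctly yields $\Psi^{I,J}_{G,K}=\pm\Psi^{I\ell,J\ell}_{G,K}a_\ell+\Psi^{I,J}_{G,K\ell}$ with the sign ambiguity legitimately absorbed into the convention (Dodgson polynomials are only defined up to sign here), and the deletion identity is exactly the observation you make: striking the edge-$\ell$ row and column of $M$ leaves the matrix of Proposition \ref{prop matrix tree} for $G\setminus\ell$.

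The gap is in the contraction identity. The Schur-complement--Cauchy--Binet computation in the proof of Proposition \ref{prop matrix tree} uses that the top-left block is the \emph{full} invertible diagonal matrix $\Lambda$; once the rows indexed by $I$ and the columns indexed by $J$ have been removed, that block is a diagonal matrix with different rows and columns deleted, it is no longer square or invertible, and the expansion cannot be ``run on $M(I,J)$'' verbatim. So for $I,J\neq\emptyset$ (the case of actual interest) this step as written fails; you would need either the all-minors matrix-tree theorem, the spanning-forest expansion of Proposition \ref{Psi to Phi prop}, or a direct matrix argument. A clean fix of the latter kind: with $a_\ell=0$, the edge-$\ell$ row and column of $M(I,J)$ contain only the incidence entries of $\ell$; adding one endpoint's vertex column (and row) to the other's and expanding along the edge-$\ell$ row and column identifies the two endpoints of $\ell$ and produces precisely the matrix for $G/\ell$, giving $\Psi^{I,J}_{G,K\ell}=\pm\Psi^{I,J}_{G/\ell,K}$. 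Separately, even in the $I=J=\emptyset$ case your bookkeeping sentence is inverted: setting $a_\ell=0$ kills the terms $\prod_{i\notin S}a_i\det(\widetilde{E}_S)^2$ with $\ell\notin S$ and keeps those with $\ell\in S$, which is what forces $\ell$ into the spanning tree; as written, the clause ``leaving only those $S$ omitting $\ell$'' contradicts the (correct) conclusion you then draw.
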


\begin{proof}
  These follow from the form of the matrix defining the Dodgson polynomials.  (See for example \cite{BrS} subsection 2.2.)
\end{proof}

\begin{prop}
  Given 5 distinct edge indices $i,j,k,l,m$
  \[
  {}^5\Psi_G(i,j,k,l,m) = \pm(\Psi^{ij,kl}_{G,m}\Psi^{ikm,jlm}_G - \Psi^{ik,jl}_{G,m}\Psi^{ijm,klm}_G)
  \]
  is independent (up to overall sign) on the order of $i,j,k,l,m$.
\end{prop}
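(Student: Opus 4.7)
The plan is to verify $S_5$-invariance (up to overall sign) by checking a generating set of transpositions. I will use $\{(jk),\, (ij),\, (kl),\, (lm)\}$: any permutation of $\{i,j,k,l,m\}$ is a composition of these, so invariance under each (up to sign) implies the proposition.

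The first three transpositions will reduce to bookkeeping. Under $j\leftrightarrow k$ the two terms of the difference simply swap, giving an overall sign $-1$. The transpositions $(ij)$ and $(kl)$ each swap two indices inside a common row- or column-removal set of every Dodgson polynomial in the expression; since permuting indices in such a list corresponds to swapping two rows (or columns) of the matrix $M$ from Proposition \ref{prop matrix tree}, each such swap multiplies the corresponding Dodgson factor by $-1$. A short sign count will then show the whole expression is preserved up to an overall sign.

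The substantial case is $l\leftrightarrow m$, since $m$ plays a structurally different role (it sits in the subscript, signifying $a_m=0$, rather than in the superscripts), so the shape of the four Dodgson factors changes under the swap. My first move will be to put both sides in a uniform form via Proposition \ref{prop contr del}: the relations $\Psi^{I,J}_{G,m}=\Psi^{I,J}_{G/m}$ and $\Psi^{I\ell,J\ell}_{G}=\Psi^{I,J}_{G\setminus \ell}$ convert the 5-invariant into
\[
\Psi^{ij,kl}_{G/m}\,\Psi^{ik,jl}_{G\setminus m} \;-\; \Psi^{ik,jl}_{G/m}\,\Psi^{ij,kl}_{G\setminus m},
\]
and the formula after the swap into the same expression with $l$ and $m$ exchanged. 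What will remain is an identity relating these two versions, which I would derive from the Dodgson--Jacobi (Desnanot--Jacobi) determinantal identity applied to the matrix of Proposition \ref{prop matrix tree}: that identity produces three-term relations between products of minors with complementary row/column removal patterns, and a judicious specialisation should equate the $l\leftrightarrow m$-swapped expression with the original up to sign.

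The main obstacle will be this last identity. One must package $\{l,m\}$ symmetrically against $\{i,j,k\}$ so that a single application of the Jacobi identity produces exactly the needed three-term relation, and one must track signs coming both from Dodgson reorderings and from the two independent reductions without losing the $\pm$ bookkeeping. Everything else---the easy sign checks and the rewriting via Proposition \ref{prop contr del}---is routine.
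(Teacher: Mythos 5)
Your reduction to the four adjacent transpositions and your rewriting of the 5-invariant as $\Psi^{ij,kl}_{G/m}\Psi^{ik,jl}_{G\setminus m}-\Psi^{ik,jl}_{G/m}\Psi^{ij,kl}_{G\setminus m}$ are fine (and $(jk)$ indeed just swaps the two terms), but the treatment of $(ij)$ and $(kl)$ contains a genuine error. These are not row/column-swap bookkeeping: in this paper $I$ and $J$ are \emph{sets}, so there are no reordering signs at all, and, more importantly, in each term only one of the two Dodgson factors has both swapped indices in a common removal set. The other factor changes species: under $i\leftrightarrow j$ the factor $\Psi^{ikm,jlm}_G$ becomes $\Psi^{jkm,ilm}_G=\Psi^{jk,il}_{G\setminus m}$, i.e.\ the third pairing $\Psi^{il,jk}$, which does not appear in the original expression, and similarly for $k\leftrightarrow l$. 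So this step cannot be settled by a sign count; it requires the first Pl\"ucker-type identity $\Psi^{ij,kl}-\Psi^{ik,jl}+\Psi^{il,jk}=0$ (with suitable sign conventions, applied both on $G/m$ and on $G\setminus m$), which your proposal never invokes. With it the step does close: writing $a,b,c$ for the three pairings on each of the two graphs, the relation $c=b-a$ makes the three possible $2\times2$ determinants $a_1b_2-b_1a_2$, $a_1c_2-c_1a_2$, $b_1c_2-c_1b_2$ agree up to sign.

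The second, and central, gap is that the $l\leftrightarrow m$ case is only a plan. The identity you need is not a direct instance of Desnanot--Jacobi: the four factors are minors of $M$ modified in two different ways (setting $a_m=0$ versus deleting row and column $m$), and equating the $l$- and $m$-versions requires expanding the Dodgson polynomials in $a_l$ and $a_m$ via Proposition \ref{prop contr del} and combining Pl\"ucker-type linear identities with Dodgson/Jacobi-type quadratic identities; "a judicious specialisation should equate" the two sides is exactly the step that has to be carried out, and it is where all the work lies. Note that the paper itself does not prove the proposition but cites Lemma 87 of \cite{Brbig}; your intended toolkit (Pl\"ucker plus Jacobi identities for the matrix of Proposition \ref{prop matrix tree}) is the right one and is essentially what that reference uses, but as written your argument both misclassifies the $(ij)$, $(kl)$ cases as trivial and leaves the crucial $(lm)$ identity unproved.
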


For a proof see Lemma 87 in \cite{Brbig}.

\begin{definition}
${}^5\Psi_G(i,j,k,l,m)$ is called the \emph{5-invariant} of $G$ depending on edges $i,j,k,l,m$.
\end{definition}

\begin{definition}
  Given $G$ with at least $5$ edges, \emph{denominator reduction} is a sequence of polynomials $D^5$, $D^6$, \ldots $D^k$, defined by
  \begin{itemize}
    \item $D^5_G(i_1,i_2,i_3,i_4,i_5) = {}^5\Psi_G(i_1,i_2,i_3,i_4,i_5)$
    \item If $D^j_G(i_1,\ldots,i_j)$ can be factored as 
      \[
       D^j_G(i_1,\ldots, i_j) = (Aa_\ell + B)(Ca_\ell + D)
      \]
      where $A,B,C,D$ are polynomials (not necessarily nonzero) in the edge variables not involving $a_\ell$
      then $D^{j+1}_G(i_1,\ldots,i_j,\ell) = \pm(AD-BC)$
    \item If $D^{j+1}_G=0$ or $D^j_G$ cannot be factored then denominator reduction ends.
  \end{itemize}
\end{definition}

There are a number of things to observe about this definition.  First, $D^j_G(i_1,\ldots,i_j)$ is defined up to overall sign.

Second, for a given graph, different edge orders will give a different sequence of polynomials, and such sequences may not all be the same length.  The goal is to find long sequences because of the next observation.

Third, the sequences of polynomials given by denominator reduction calculate at least two useful things.  Namely they give the denominators of the result of taking the Feynman integral of $G$ and integrating the indicated variables \cite{Brbig}, and furthermore they carry a significant part of the information of the integral which can be seen from the fact that the $c_2$ invariant of $G$ \cite{BrS}, an arithmetic invariant, can be calculated from the denominators.

Fourth, we can begin the sequence with $D_4$ rather than $D_5$ at the expense of the fact that for any choice of four edges there are three generically distinct possible choices for $D_4$, 
\[
\Psi^{ij,kl}\Psi^{ik,jl}, \Psi^{ij,kl}\Psi^{il,jk}, \text{ or } \Psi^{ik,jl}\Psi^{il,jk}
\]
each of which yields the same $D_5$ and onwards following the denominator reduction algorithm.  The denominators of the Feynman integral with fewer than 4 edges integrated also have nice explicit forms, see \cite{Brbig}, but they don't relate by the same identity.

Denominator reduction is appealing in the context of this paper because the resulting polynomials usually have combinatorial interpretations as sums of products of spanning forest polynomials \cite{BrY}.  The true strength of denominator reduction is in the interplay of these many viewpoints.

\medskip

To advance with these observations we need to define the $c_2$ invariant and spanning forest polynomials.

\begin{definition}
  Let $f_1,\ldots f_k$ be polynomials in the variables $x_1, \ldots, x_n$.  Let $q$ be a prime power.  Let $V(f_1,\ldots, f_k)$ be the affine variety defined by the vanishing of the $f_i$.  Define
  \[
  [f_1,\ldots,f_k]_q
  \]
  to be the number of points in $V(f_1, \ldots, f_k)$ over $\mathbb{F}_q$, the finite field with $q$ elements.
\end{definition}

Brown and Schnetz showed \cite[Theorems 2, 3, and 29]{BrS}
\begin{prop}
  Let $G$ be any connected graph with at least 3 vertices, then
  \[
    [\Psi_G]_q \equiv c_2(G)_q \,q^2 \mod q^3
  \]
  for some $c_2(G)_q\in \mathbb{Z}/q\mathbb{Z}$.

  If further $G$ has at least $5$ edges, then
  \[
  c_2(G)_q \equiv (-1)^n[D^n_G(i_1,\ldots,i_n)]_q  \mod q
  \]
  for any edges $i_1,\ldots,i_n$ of $G$ for which denominator reduction is defined.
\end{prop}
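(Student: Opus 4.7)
Both assertions reduce to point-counting arguments that exploit the very special multilinear structure of $\Psi_G$ and of the Dodgson polynomials built from it. For the first claim, the plan is to pick two edges, use that $\Psi_G$ is linear in each of $a_1,a_2$ to expand it as a bilinear polynomial in $(a_1,a_2)$, and count zeros in $\mathbb{F}_q^2$ case-by-case on which of the four coefficient polynomials vanish. After summing over the remaining edge variables, the combination of terms that is not visibly divisible by $q^2$ reassembles, via the Dodgson identity
\[
\Psi_{G/12}\Psi_{G\setminus 12}-\Psi_{G/1\setminus 2}\Psi_{G\setminus 1/2}=\pm(\Psi^{1,2}_G)^2,
\]
into a count involving the perfect square $(\Psi^{1,2}_G)^2$; a second application of the multilinear counting argument on this count (using that $\Psi^{1,2}_G$ is itself multilinear of low degree) contributes the final factor of $q$. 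The hypothesis of three vertices ensures the contracted subgraphs are nondegenerate. Define $c_2(G)_q$ as the resulting quotient modulo $q$.

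For the second claim, induct on the denominator reduction step $n$. Given the factorization $D^n_G=(Aa_\ell+B)(Ca_\ell+D)$, the inner count $\#\{a_\ell:(Aa_\ell+B)(Ca_\ell+D)=0\}$ for fixed choice of the remaining variables is $2$ when both linear factors have simple and distinct roots, $1$ when the roots coincide (which happens precisely when $AD-BC=0$), and a multiple of $q$ in the degenerate cases where some of $A,B,C,D$ vanish identically. Collecting these after summing over the remaining edge variables yields $[D^n_G]_q\equiv -q[AD-BC]_q=-q[D^{n+1}_G]_q\pmod{q^2}$, which is exactly the statement that each reduction step contributes an additional factor of $-q$, propagating the sign $(-1)^n$ and a factor of $q^n$. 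The base case connects $[\Psi_G]_q\equiv q^2 c_2(G)_q\pmod{q^3}$ to $[{}^5\Psi_G]_q$ by performing five analogous single-variable integrations starting from the bilinear expansion of $\Psi_G$ used in the first part.

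The main obstacle is the bookkeeping of boundary contributions: in both parts, the degenerate cases where several coefficient polynomials vanish simultaneously contribute extra factors of $q$ that must be shown to cancel by symmetry or to be absorbable into the next power of $q$. Well-definedness of the $5$-invariant up to sign (Lemma 87 of \cite{Brbig}) and of each subsequent $D^n_G$ up to sign guarantees the final congruence is independent of the edge order chosen for the reduction, producing a single well-defined invariant $c_2(G)_q\in\mathbb{Z}/q\mathbb{Z}$.
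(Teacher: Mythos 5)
First, note that the paper offers no proof of this proposition at all: it is quoted directly from Brown and Schnetz (Theorems 2, 3 and 29 of \cite{BrS}), so your attempt has to be measured against their arguments, which are indeed fibrewise point counts exploiting multilinearity and the Dodgson identity, much as in your outline of the first congruence. The genuine problem is your induction step for the second congruence. You claim $[D^n_G]_q\equiv -q[D^{n+1}_G]_q\pmod{q^2}$, so that ``each reduction step contributes an additional factor of $-q$ \ldots\ a factor of $q^n$''. This cannot be the right relation: it would force $q\mid[D^n_G]_q$ at every stage, and then the proposition itself would give $c_2(G)_q\equiv 0$ for every graph admitting a reduction, which is false (the whole point of Example \ref{eg GBS} is a graph whose $c_2$ is a nontrivial point count). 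Note also that $[D^n_G]_q$ and $[D^{n+1}_G]_q$ are counts in affine spaces of different dimensions ($|E(G)|-n$ versus $|E(G)|-n-1$ variables); the statement one must prove is $[D^n_G]_q\equiv-[D^{n+1}_G]_q\pmod q$, with no accumulating powers of $q$.

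Second, the degenerate fibres do not ``cancel by symmetry'' and cannot simply be absorbed into a higher power of $q$. Summing your fibre counts exactly gives, with $m$ the number of remaining variables and $\Delta=AD-BC$, the congruence $[(Aa_\ell+B)(Ca_\ell+D)]_q\equiv 2q^m-[AC]_q-[\Delta]_q\pmod q$, so the induction step is valid precisely when $q\mid[AC]_q$. For arbitrary polynomials this fails: for $(xa_\ell+1)(a_\ell+x)$ the count is $2q-3\equiv-3$, while $-[x^2-1]_q=-2$. The step survives only because of degree and structure properties of the specific graph-derived factors (Chevalley--Warning/Ax-type divisibility applied to the leading coefficients, which are again Dodgson-type polynomials of degree small relative to the number of remaining variables); this input is the actual engine of the Brown--Schnetz proof and is missing from your plan. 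The same degree-based divisibility, not only the Dodgson identity, is what disposes of the residual terms in your first part (for instance $[\Psi_{G\setminus 1\setminus 2}]_q$ and the loci where several coefficient polynomials vanish simultaneously, which need an induction or a further counting lemma). Finally, the base case of the second congruence---descending from $[\Psi_G]_q$ modulo $q^3$ to $[{}^5\Psi_G]_q$ modulo $q$---is the technical heart of the cited Theorem 29: the denominators after two, three and four integrations are products and sums of Dodgson polynomials of different shapes, not a single product of two factors linear in one variable, so they are not handled by iterating the one-variable step you describe.
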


\begin{definition}
  Given a connected graph $G$ with at least 3 vertices, 
  the sequence indexed by prime numbers
  \[
    c_2(G) = (c_2(G)_2, c_2(G)_3, c_2(G)_5, \ldots)
  \]
  is the $c_2$ invariant of $G$.
\end{definition}

In fact this is but one manifestation of the $c_2$ invariant, and not the deepest one \cite{BrSY}, but it suffices for the purposes of this paper.  The $c_2$ invariant is relatively easy to calculate but contains a lot of information about the period.  For example if $c_2(G)=0$ then the period of the graph has a drop in transcendental weight.

\medskip

Spanning forest polynomials give a combinatorial interpretation for Dodgson polynomials and are very handy for reasoning about them.

\begin{definition}
  Let $G$ be a graph.
  Let $P$ be a partition of a subset of the vertices of $G$.  Let $\mathcal{F}_{P,G}$ be the set of all subgraphs $F$ of $G$ with the properties that 
  \begin{itemize}
    \item $V(F) = V(G)$
    \item $F$ is acyclic
    \item The number of connected components of $F$ is the number of parts of $P$
    \item For each part $p$ of $P$, there is a connected component of $F$ which contains  all the vertices of $p$ and none of the other vertices of $P$.
  \end{itemize}
  Call the elements of $\mathcal{F}_{P,G}$ the \emph{spanning forests of $F$ consistent} with $P$.   
\end{definition}

Note that the spanning forests may have isolated vertices, but only if those vertices are alone in a part of $P$.

\begin{definition}
  Let $G$ be a graph.  Take a variable $a_e$ for each edge $e$ of $G$.  Let $P$ be a partition of a subset of the vertices of $G$.  Then
  \[
  \Phi^P_G = \sum_{F\in \mathcal{F}_{P,G}}\prod_{e\not\in F}a_e 
  \]
  Such polynomials are called \emph{spanning forest polynomials}.
\end{definition}

The number of parts of $P$ will be called the number of \emph{colours} of $\Phi^P_G$; the idea is that we could draw the graph and colour those vertices of the partition according to which part they belonged to.

If $G$ has an isolated vertex $v$, then for $\Phi^P_G$ to be nonzero, $v$ must be alone in a part of $P$.  Then as polynomials
\[
\Phi^P_G = \Phi^{P-\{v\}}_{G-v}
\]
but the former has one more colour than the latter.

The relationship between Dodgson polynomials and spanning forest polynomials is given in Proposition 12 from \cite{BrY}.
\begin{prop}\label{Psi to Phi prop}
Let $I,J,K$ be sets of edge indices of a connected graph $G$ with $|I|=|J|$, then
\begin{equation}\label{Psi to Phi} 
  \Psi^{I,J}_{G,K} = \sum_P \pm \Phi^P_{G \backslash (I\cup J\cup K)}
\end{equation}
where the sum runs over all set partitions $P$ of the end points of edges of $(I \cup J \cup K) \setminus (I\cap J)$ with the property that all the forests of $P$ become trees in both \[G \backslash I / (J\cup K) \quad \text{and} \quad G \backslash J / (I\cup K)\] 
\end{prop}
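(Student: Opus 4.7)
The plan is to prove this identity directly by expanding $\det M(I,J)|_K$ (which defines $\Psi^{I,J}_{G,K}$) via a generalized Laplace expansion and then applying the matrix-tree theorem, exactly as in the proof of Proposition \ref{prop matrix tree}. First I would use the contraction-deletion relation $\Psi^{I\ell,J\ell}_{G,K} = \Psi^{I,J}_{G\setminus \ell,K}$ of Proposition \ref{prop contr del} to reduce to the case in which $I$, $J$, and $K$ are pairwise disjoint: the endpoints of any edge $\ell \in I \cap J$ are already excluded from the partitioned vertex set by the ``$\setminus (I \cap J)$'' in the statement, and the graph $G \setminus (I \cup J \cup K)$ is unchanged whether $\ell$ is regarded as an edge of $G$ or already deleted, so both sides transform correctly.

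Next I would write $\det M(I,J)|_K$ as a sum over bijections $\sigma \colon (E(G) \setminus I) \cup V' \to (E(G) \setminus J) \cup V'$, where $V' = V(G) \setminus \{v_0\}$. For a nonzero term, each edge-row $e$ must either map diagonally ($\sigma(e) = e$, contributing $a_e$; available only when $e \notin J \cup K$) or into the vertex block ($\sigma(e) \in V'$); rows $e \in J$ have no corresponding diagonal column and rows $e \in K$ have diagonal entry zero, so both are forced into the vertex block. Writing $L_0$ for the set of edges outside $I \cup J \cup K$ that are sent to $V'$, a short bijection count shows that the edge-columns hit by vertex-rows form the set $I \cup K \cup L_0$ and that $|J \cup K \cup L_0| = |I \cup K \cup L_0| = |V(G)| - 1$. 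Collecting terms by $L_0$ yields
\[
\det M(I,J)|_K = \sum_{L_0} \pm \Big(\prod_{e \notin I \cup J \cup K \cup L_0} a_e\Big)\, \det \widetilde{E}_{V',\, J \cup K \cup L_0}\, \det \widetilde{E}_{V',\, I \cup K \cup L_0}.
\]

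By the matrix-tree theorem, each of the two $\widetilde{E}$-minors equals $\pm 1$ exactly when its indexing edge set is a spanning tree of $G$ and vanishes otherwise. Hence the surviving $L_0$ are the spanning forests of $G \setminus (I \cup J \cup K)$ such that both $L_0 \cup J \cup K$ and $L_0 \cup I \cup K$ are spanning trees of $G$, equivalently, $L_0$ is a spanning tree of both $G \setminus I / (J \cup K)$ and $G \setminus J / (I \cup K)$. Each such $L_0$ induces a partition $P$ on the endpoints of $I \cup J \cup K$ by the rule that two endpoints lie in the same part iff they lie in the same component of $L_0$; the two tree conditions force every component of $L_0$ to contain an endpoint of $J \cup K$ as well as an endpoint of $I \cup K$, so the number of parts of $P$ matches the number of components of $L_0$. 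Grouping the surviving terms by the induced $P$ yields $\sum_P \pm \Phi^P_{G \setminus (I \cup J \cup K)}$.

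The main obstacle is the sign accounting: one must show that the product of the Laplace sign of $\sigma$ with the two matrix-tree signs depends on $L_0$ only through the induced partition $P$, so that forests contributing to the same $P$ all come with the same overall sign and may be gathered into a single $\pm \Phi^P$. This consistency is a careful but elementary calculation once an orientation of $G$ used to define $\widetilde{E}$ has been fixed.
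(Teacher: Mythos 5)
The paper itself gives no proof of this proposition --- it is quoted as Proposition 12 of \cite{BrY} --- so there is no in-paper argument to compare against; your expansion (generalized Laplace on $\det M(I,J)$ with $a_K=0$, forcing the rows of $J\cup K$ into the vertex block, then the matrix--tree theorem for the two incidence minors) is exactly the standard route underlying the cited result. The combinatorial bookkeeping is essentially right: the surviving sets $L_0$ are the spanning forests of $G\setminus(I\cup J\cup K)$ for which $L_0\cup J\cup K$ and $L_0\cup I\cup K$ are spanning trees of $G$, and whether this holds depends only on which endpoints of $J\cup K$ (resp.\ $I\cup K$) lie in which component of $L_0$, i.e.\ only on the induced partition $P$, so grouping the surviving terms by $P$ really does reassemble complete spanning forest polynomials.

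The genuine gap is the step you set aside at the end: the sign consistency is not a routine afterthought but is precisely the nontrivial content of the proposition. After your expansion you have $\Psi^{I,J}_{G,K}=\sum_{L_0}\epsilon(L_0)\prod_{e\notin I\cup J\cup K\cup L_0}a_e$ with $\epsilon(L_0)\in\{\pm 1\}$, and each of the three sign sources genuinely varies with $L_0$: the Laplace sign through the positions of the rows and columns indexed by $L_0$, and each minor $\det\widetilde{E}_S$ through the particular spanning tree $S$ (for a fixed orientation and edge order, different trees give different signs). The assertion that their product depends on $L_0$ only through $P$ is what entitles you to write the right-hand side as $\sum_P\pm\Phi^P$ at all; it is true, but calling it ``a careful but elementary calculation'' leaves the proof incomplete, since without it you have only a signed sum over individual forests. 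To close it you would either prove an all-minors (Chaiken-type) matrix--tree statement recording that the sign of such a product of tree minors depends only on how the tree connects the distinguished endpoints, or sidestep signs entirely by proving the identity by induction on the edges outside $I\cup J\cup K$ via contraction--deletion (Proposition \ref{prop contr del}), verifying the base case where every edge lies in $I\cup J\cup K$. Two smaller points: your reduction ``to pairwise disjoint $I,J,K$'' only treats $I\cap J$ (overlaps of $K$ with $I\cup J$ should be dispatched by noting those variables simply do not occur in $M(I,J)$), and the equivalence between ``$L_0\cup J\cup K$ is a spanning tree of $G$'' and ``$L_0$ is a spanning tree of $G\setminus I/(J\cup K)$'' tacitly uses that $J\cup K$ (likewise $I\cup K$) is a forest; when it is not, the determinant side vanishes identically and the stated forest-side condition needs the corresponding degenerate convention, so a sentence addressing that case is needed for the argument to be airtight.
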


Thus Dodgson polynomials are always signed sums over spanning forest polynomials.  As we denominator reduce, the factorizations are not required to be expressible in terms of Dodgson polynomials, however in practice this is what occurs.  Then the $D^n_G$ are typically signed sums of products of pairs of spanning forest polynomials.

\subsection{Free factorizations}

Nothing that has been described so far explains how many steps of denominator reduction will be possible in a given graph.  Lets think of this in terms of free factorizations.  If we make good choices of edges to reduce then the denominator may automatically be factored for combinatorial reasons.  Call this a \emph{free} factorization.

One important source of free factorizations, which would be known to anyone working with such Feynman integrals, and was first described in this language in \cite[Section 7.4]{Brbig}, is the following proposition.

\begin{prop}\label{prop triangle and 3}
Let $I$, $J$, and $K$ be sets of edge indices of a graph $G$.
\begin{enumerate}
\item Let $\{i,j,k\}$ be a triangle in $G$.  Then if $\{i,j,k\}\subseteq (K\cup I)\setminus J$ 
  \[
  \Psi^{I,J}_{G,K}=0
  \]
  While if $\{i,j\}\subseteq (K\cup I)\setminus J$ with $k\not\in I\cup J\cup K$, then $\Psi^{I,J}_{G,K}$ is divisible by $a_k$.

\item Let $\{i,j,k\}$ be a cut set in $G$ (that is $G\setminus \{i,j,k\}$ is disconnected).  Then if $\{i,j,k\} \subseteq I$
  \[
  \Psi^{I,J}_{G,K}=0
  \]
  While if $\{i,j\}\subseteq I$ with $k\not\in I\cup J\cup K$, then $\Psi^{I,J}_{G,K}$ is independent of $a_k$.
\end{enumerate}
\end{prop}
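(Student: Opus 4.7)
The plan is to prove the two claims by different routes: part (1) by a direct column manipulation in the matrix of Proposition \ref{prop matrix tree}, and part (2) by combining contraction-deletion (Proposition \ref{prop contr del}) with the spanning-forest expansion (Proposition \ref{Psi to Phi prop}).

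For the triangle case in part (1), I would orient the triangle cyclically so that in the matrix $M$ the three columns indexed by $i,j,k$ satisfy the identity that each vertex of the triangle contributes $+1$ from one of these columns and $-1$ from another. Consequently the sum of columns $i,j,k$ is zero in every vertex row, while in the edge rows the sum is simply $a_i e_i + a_j e_j + a_k e_k$, since each column $\ell \in \{i,j,k\}$ contributes only its diagonal entry $a_\ell$ at edge row $\ell$. Passing to the minor $M(I,J)|_{a_e=0,\,e\in K}$ kills every one of these three surviving contributions whenever the index lies in $I$ (the row disappears) or in $K$ (the variable vanishes). Thus if $\{i,j,k\}\subseteq(K\cup I)\setminus J$, all three contributions die, the column sum is identically zero in the minor, and $\Psi^{I,J}_{G,K}=0$. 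If instead $\{i,j\}\subseteq(K\cup I)\setminus J$ but $k\notin I\cup J\cup K$, then only the $a_k e_k$ contribution survives; replacing column $k$ by the three-column sum (a determinant-preserving operation) yields a column whose only nonzero entry is $a_k$ in row $k$, so expansion along that column exhibits $a_k$ as a factor of $\Psi^{I,J}_{G,K}$.

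For the cut-set case in part (2), I would invoke Proposition \ref{Psi to Phi prop}, which expresses $\Psi^{I,J}_{G,K}$ as a signed sum of spanning-forest polynomials $\Phi^P_{G\setminus(I\cup J\cup K)}$ in which each admissible partition $P$ is required to come from spanning forests that become trees in $G\setminus I/(J\cup K)$. If $\{i,j,k\}\subseteq I$ and $\{i,j,k\}$ is a cut set, then $G\setminus I$ is disconnected, and since any cross-component edge of $G$ must lie in $\{i,j,k\}$ no edge of $J\cup K$ can merge the two sides under contraction; hence $G\setminus I/(J\cup K)$ has no spanning tree, every term in the expansion vanishes, and $\Psi^{I,J}_{G,K}=0$. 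For the second sub-case $\{i,j\}\subseteq I$, $k\notin I\cup J\cup K$, I would apply Proposition \ref{prop contr del} to identify the coefficient of $a_k$ in $\Psi^{I,J}_{G,K}$ with $\Psi^{Ik,Jk}_{G,K}=\Psi^{I,J}_{G\setminus k, K}$; in $G\setminus k$ the pair $\{i,j\}$ is a two-edge cut set lying in $I$, so the same disconnection argument shows this coefficient is zero, and hence $\Psi^{I,J}_{G,K}$ does not depend on $a_k$.

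The main technical obstacle is verifying that disconnectedness of $G\setminus I/(J\cup K)$ really forces every summand in the spanning-forest expansion to vanish, which amounts to checking that no admissible partition $P$ in Proposition \ref{Psi to Phi prop} can support a spanning tree in a disconnected contracted graph; this is essentially built into the statement of that proposition but deserves care, especially in keeping track of the roles of $I$, $J$, and $K$ when they are not pairwise disjoint.
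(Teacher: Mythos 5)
Your argument is correct, and for the triangle case it takes a genuinely different route from the paper. For part (2) you do essentially what the paper does: the paper also observes that $\{i,j,k\}\subseteq I$ makes $G\setminus I/(J\cup K)$ disconnected and invokes Proposition \ref{Psi to Phi prop} to kill $\Psi^{I,J}_{G,K}$, and it gets the ``independent of $a_k$'' statement from the vanishing statement by one application of Proposition \ref{prop contr del} (your detour through $\Psi^{Ik,Jk}_{G,K}=\Psi^{I,J}_{G\setminus k,K}$ and a rerun of the disconnection argument is the same thing). For part (1), however, the paper stays in the spanning-tree world: since no spanning tree of $G\setminus J$ contains all three edges of a triangle, every monomial of the relevant Kirchhoff polynomial carries one of $a_i,a_j,a_k$, so contracting (setting to zero) all three gives $\Psi_{G\setminus J/(I\cup K)}=0$ and hence $\Psi^{I,J}_{G,K}=0$, with the divisibility statement again deduced via contraction-deletion. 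You instead work directly in the matrix of Proposition \ref{prop matrix tree}: the cyclic orientation makes the three triangle columns sum to zero in the vertex rows, so in $M(I,J)|_{a_K=0}$ the surviving diagonal entries are exactly those killed by the hypothesis, giving an explicit linear dependence for the vanishing statement, and the column operation exposing the lone entry $a_k$ gives divisibility without any appeal to contraction-deletion or to forest polynomials. Your route is more elementary and self-contained (it never uses Proposition \ref{Psi to Phi prop} for the triangle case), while the paper's is shorter given the machinery already set up and displays the triangle/3-cut duality more symmetrically. One small point to tidy: the Dodgson polynomial is defined relative to a chosen orientation and removed row, and reorienting an edge in $I\,\triangle\,J$ flips its sign; so either note that both claims are insensitive to this sign ambiguity before ``orienting the triangle cyclically,'' or avoid the issue by taking the appropriately signed combination $\pm\,\mathrm{col}_i\pm\mathrm{col}_j\pm\mathrm{col}_k$ for an arbitrary orientation, which makes the same argument go through verbatim.
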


Since we are interested in $G$ primitive, the only 3-edge cuts will be 3-valent vertices, so the second point is only interesting in the case where $i,j,k$ meet at a common 3-valent vertex.

\begin{proof}
  By Proposition \ref{prop contr del}
  \[
    \Psi^{I,J}_{G,K} = \Psi^{I\ell,J\ell}_{G,K}a_\ell + \Psi^{I,J}_{G,K\ell}.
  \] 
  for any edge index $\ell \not\in I\cup J\cup J$.
  Hence in both (1) and (2) the first statement implies the second.

  If $\{i,j,k\}$ is a cut set and $\{i,j,k\} \subseteq I$, then $G \backslash I / (J\cup K)$ is disconnected  and hence $\Psi^{I,J}_{G,K}=0$ by Proposition \ref{Psi to Phi prop}.  (See \cite{Brbig} for a proof independent of spanning forest polynomials.)

  Dually, if $\{i,j,k\}$ is a triangle and $\{i,j,k\}\subseteq (K\cup I)\setminus J$  then no spanning tree of $G \backslash J$ contains all three of the edges,  Hence setting all three variables to $0$, equivalently contracting them, gives $\Psi_{G \backslash J / (I\cup K)}=0$ and hence $\Psi^{I,J}_{G,K}=0$.
\end{proof}

The previous proposition gives us free factorizations because if we reduce two edges of the triangle or the 3-valent vertex, say in the initial 5-invariant, then by choosing an appropriate order we can make sure that in each term one of the factors either has  no constant term (triangle case) or no linear term (cut set case) in the other edge.  Thus the entire 5-invariant either has no constant term (triangle case) or has no quadratic term (cut set case).  In either case we have a trivial sort of factorization and can continue denominator reducing.

\medskip

Another important source of free factorizations is graphs which are very narrow in the following sense \cite[section 1.4]{Brbig}.

\begin{definition}
  Given a total order $<$ on the edges of a graph $G$, for $0\leq i \leq |E(G)|$, let $L_i(G)$ be the first $i$ edges in the order and $R_i(G)$ be the remaining edges in the order.  Let
  \[
  V_i(G,<) = V(L_i(G))\cap V(R_i(G))
  \]
  Then the \emph{vertex width} of $G$ is
  \[
  \min_< \left(\max_{0\leq i \leq |E(G)|} |V_i(G,<)|\right)
  \]
  where the minimum runs over all total orders of edges of the graph.
\end{definition}

This notion of vertex width is related to, but not equivalent to, the notion of path width from graph theory.

\begin{definition}
  Let $G$ be a graph.  A \emph{path decomposition} of $G$ is a sequence of sets of vertices of $G$, 
  \[
  V_0, V_1, \ldots, V_k
  \]
  such that 
  \begin{itemize}
    \item for every edge $e$ of $G$ there is a $V_i$ such that both ends of $e$ are in $V_i$ and 
    \item for every vertex $v$ of $G$, the $V_i$ in which $v$ appear form a contiguous subsequence of the original sequence.
  \end{itemize}
  The \emph{width} of the path decomposition is $\max_{0\leq i\leq k} |V_i|-1$.
\end{definition}

\begin{definition}
  The \emph{path width} of $G$ is the minimum width of a path decomposition of $G$.
\end{definition}

Path width was first defined by Robertson and Seymour \cite{RS1} and is an important notion in graph theory.  It is a special case of the tree width of a graph.

\begin{prop}
  Let $G$ be a graph, let $vw(G)$ be the vertex width of $G$ and $pw(G)$ be the path width of $G$.  Then
  \[
    vw(G) \geq pw(G)
  \]
\end{prop}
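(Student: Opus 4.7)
The plan is to build, from an edge ordering realizing the vertex width, an explicit path decomposition of $G$ of width at most $vw(G)$. Fix an edge ordering $e_1, \ldots, e_m$ with $\max_i |V_i(G,<)| = vw(G) = w$, and for each vertex $v$ let $f_v$ and $l_v$ denote the indices of the first and last edges of the ordering that are incident to $v$. Define
\[
B_i = \{v \in V(G) : f_v \leq i \leq l_v\}, \qquad 1 \leq i \leq m,
\]
the set of vertices ``alive'' at step $i$.

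First I would verify that $(B_1, \ldots, B_m)$ is a path decomposition. Each endpoint $v$ of $e_i$ satisfies $f_v \leq i \leq l_v$ by definition, so both endpoints of $e_i$ lie in $B_i$, covering the edge. Contiguity is immediate: $v \in B_i$ iff $i \in [f_v, l_v]$, which is a contiguous range.

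I would then bound the bag sizes by relating $B_i$ to the cut vertex sets $V_i(G,<)$. Writing $N_i \subseteq V(e_i)$ for the set of endpoints of $e_i$ whose first edge is $e_i$, and $M_i \subseteq V(e_i)$ for those whose last edge is $e_i$, a direct computation gives
\[
B_i = \bigl(V_{i-1}(G,<) \cap V_i(G,<)\bigr) \cup V(e_i),
\]
whence a brief inclusion-exclusion yields the two identities
\[
|B_i| = |V_i(G,<)| + |M_i| = |V_{i-1}(G,<)| + |N_i|.
\]
Since $|N_i|, |M_i| \leq 2$ and $|V_i(G,<)|, |V_{i-1}(G,<)| \leq w$, this already gives $|B_i| \leq w + 2$.

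The main obstacle is sharpening this bound by one, so that the path decomposition has width at most $w$ rather than $w + 1$. The value $|B_i| = w + 2$ can only be attained when $|N_i| = |M_i| = 2$, which forces both endpoints of $e_i$ to appear in no other edge; equivalently, $e_i$ together with its endpoints is an isolated $K_2$ component of $G$. Because the two vertices of such a component are disjoint from the vertex set of every other edge, moving $e_i$ to position $1$ of the ordering only shifts the index of the other $V_j(G,<)$ and does not change any of their sizes, so the vertex width is preserved. After applying this reordering to each isolated-edge component, the corresponding bags have size exactly $2$ while every other bag has size at most $w + 1$; the resulting decomposition has width at most $w$, as long as $w \geq 1$. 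The degenerate case $w = 0$, where $G$ is a disjoint union of isolated edges and vertices, is trivial to handle separately.
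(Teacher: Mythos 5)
Your proof is correct, and at its core it is the same construction as the paper's: your bag $B_i=\{v: f_v\le i\le l_v\}$ coincides with $V_{i-1}(G,<)\cup V(e_i)$, which is exactly the paper's odd-indexed bag $V_{2i-1}$, while the paper's even bags $V_i(G,<)$ are contained in both neighbouring odd bags and are therefore redundant, so you have simply dropped them. Where you genuinely diverge is in how the dangerous bag size $w+2$ is excluded. The paper first reduces to connected, loopless graphs and then argues that if neither end of $e_i$ lies in $V_{i-1}(G,<)$ then $V_{i-1}(G,<)\subseteq V_i(G,<)$ and, by connectivity, at least one end of $e_i$ lies in $V_i(G,<)$, giving the bound $w+1$; you instead prove the exact identities $|B_i|=|V_i(G,<)|+|M_i|=|V_{i-1}(G,<)|+|N_i|$, note that $|B_i|=w+2$ forces $e_i$ to be an isolated $K_2$ component, and dispose of such components by moving them to the front of the order, which avoids the reduction to connected components altogether and is arguably cleaner. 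Two small caveats, neither of which is a gap relative to the paper's own proof: isolated vertices of $G$ lie in none of your bags, which is fine under the paper's stated definition of path decomposition (it does not demand vertex coverage) but would require appending singleton bags under the usual definition; and your leftover case $w=0$ is exactly as delicate as the paper's ``only one edge'' case---for a graph consisting of a single non-loop edge the definitions literally give $vw(G)=0<1=pw(G)$---so your ``trivial to handle separately'' and the paper's ``holds trivially'' are waving off the same degenerate issue.
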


\begin{proof}
  First note that both the path width and the vertex width of $G$ is the maximal path width, respectively vertex width, of a connected component of $G$, and so we may assume $G$ is connected.
  Next note that loops of $G$ affect neither the path width nor the vertex width, so we may assume $G$ has no loops in the sense of graph theory.

  If $W$ is a set of vertices of $G$ use the notation $e\in W$ to say that both ends of $e$ are in $W$.

\medskip

  Suppose $G$ has vertex width $w$ and that $e_1<e_2<\ldots< e_{|E(G)|}$ is a total order on the edges of $G$ which gives width $w$.  If the graph has only one edge then the result holds trivially.  Otherwise let $E_i$ be the set of vertices which are ends of $e_i$.

Define
  \begin{align*}
    V_{2i-1} &= V_{i-1}(G,<)\cup E_i \quad 1\leq i \leq |E(G)| \\
    V_{2i} &= V_i(G,<) \quad 1 \leq i \leq |E(G)|-1
  \end{align*}
  We now want to check that the $V_j$ give a path decomposition of $G$.  By construction $e_i \in V_{2i-1}$.  By hypothesis the $V_j(G,<)$ all have size at most $w$.  

If $e_i \in V_{i-1}(G,<)$ then $V_{2i-1}=V_{i-1}(G,<)$ which has size at most $w$.  If one end of $e_i$ is in $V_{i-1}(G,<)$ then $V_{2i-1}$ has size at most $w+1$.

Suppose neither end of $e_i$ is in $V_{i-1}(G,<)$.  Then each vertex of $V_{i-1}(G,<)$ must meet at least one edge $e_j$ with $j>i$.  Thus $V_{i-1}(G,<)\subseteq V_i(G,<)$.  Also any edges which meet an end of $e_i$ must have index greater than $i$.  Finally by connectivity $e_i$ has at most one end of degree 1.   Thus $|E_i \cap V_i(G,<)| \geq 1$, and so $V_{2i-1}$ has size at most $w+1$.

Consider a vertex $v$ of $G$.  $v$ first appears in the $V_i(G,<)$ when the first edge incident to $v$ passes into $L_i(G)$ and last appears when the last edge incident to $v$ passes into $L_i(G)$.  Thus $v$ appears in a contiguous subset of the $V_i(G,<)$.  If $v\in E_i$ then $v\in V_{i-1}(G,<)$ or $v\in V_i(G,<)$.  Therefore $v$ appears in a contiguous subset of the $V_j$ and so the $V_j$ give a path decomposition of $G$ with parts of size at most $w+1$.  Hence $G$ has path width at most $w$.  
\end{proof}

We can see that the path width and vertex width are not equal in general by considering $K_{3,3}$

\begin{example}
  Consider $K_{3,3}$ with the vertices labelled as follows
  \[
  \includegraphics{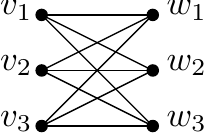}
  \]
  $K_{3,3}$ has path width at most 3, from the following path decomposition
  \[
  \{v_1,w_1,w_2,w_3\}, \{v_2,w_1,w_2,w_3\}, \{v_3,w_1,w_2,w_3\}
  \]
 
  Suppose $K_{3,3}$ had vertex width $3$.  Due to the automorphisms of $K_{3,3}$ it doesn't matter which edge we take first in the order, say we first take $\{v_1,w_1\}$. Then 
  \[
  V_1(G,<) = \{v_1,w_1\}
  \]
  To force $V_2(G,<)$ to have size at most 3 we must pick as our next edge an edge incident to $v_1$ or to $w_1$.  Without loss of generality take $\{v_1,w_2\}$ as the next edge.  Then
  \[
  V_2(G,<) = \{v_1,w_1,w_2\}
  \]
  $w_1$ and $w_2$ both have two further edges incident to them.  But we must lose a vertex from $V_2(G,<)$ for every vertex added, so as our next edge we must take $\{v_1,w_3\}$.  Then
  \[
  V_3(G,<) = \{w_1,w_2,w_3\}
  \]
  Now no matter which edge we take next we will still have $w_1$,$w_2$, and $w_3$ in $V_4(G,<)$ along with either $v_1$ or $v_2$, and hence $K_{3,3}$ is not vertex width $3$.
\end{example}

Brown in \cite{Brbig} proved that for every primitive 4-point graph in $\phi^4$ with vertex width at most 3 there is an edge order so that denominator reduction either terminates with a $0$ or continues until no edges remain; the graph is sufficiently narrow that every step gives a free factorization.  

Iain Crump in \cite{Cmsc} showed that every 3-connected graph for which all 5-invariants factor due to one of the defining polynomials being zero has a particular structure which implies vertex width at most 3.

\medskip

Triangles and 3-valent vertices are shapes in the graph which give free factorizations. There is one more shape which the author is aware of which yields free factorizations.

\begin{prop}\label{prop new free}
  Let $G$ be a graph with two 3-valent vertices $v_1$ and $v_2$ joined by an edge.  Then after reducing the five edges incident to $v_1$ and $v_2$ and any other two edges of $G$, the denominator $D^7_G$ factors into two factors each at most linear in each edge variable, and expressible in terms of Dodgson polynomials.
\end{prop}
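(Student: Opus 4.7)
The plan is to show that the 5-invariant on the five edges at $v_1,v_2$ is already a product of two Dodgson polynomials (a free factorization coming from the 3-cut at $v_1$), and then to track how this factored structure propagates through the two further denominator reductions, using the analogous cut structure at $v_2$ to control the final step.

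Label the edges so that $e_0$ is the edge $v_1 v_2$, $e_1, e_2$ are the other edges at $v_1$, $e_3, e_4$ are the other edges at $v_2$, and $f_1, f_2$ are the two additional edges. I would begin denominator reduction from the $4$-invariant $D^4 = \Psi^{12,34}_G \cdot \Psi^{13,24}_G$ and factor it in $a_0$ using Proposition \ref{prop contr del}. The crucial vanishing is that $\{e_0,e_1,e_2\}$ is a $3$-cut at $v_1$, so by Proposition \ref{prop triangle and 3}(2) we have $\Psi^{012,034}_G = 0$, whence $\Psi^{12,34}_G = \Psi^{12,34}_{G,0}$ is independent of $a_0$. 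This gives $D^4 = \Psi^{12,34}_{G,0}\bigl(a_0\Psi^{013,024}_G + \Psi^{13,24}_{G,0}\bigr)$ in the factored form $(0\cdot a_0 + B)(Ca_0 + D)$, and the denominator reduction rule yields $D^5 = \pm\Psi^{12,34}_{G,0}\Psi^{013,024}_G$, already a product of two Dodgson polynomials; equivalently, $D^5 = \pm\Psi^{12,34}_{G/e_0}\cdot\Psi^{13,24}_{G\setminus e_0}$, with the two factors living naturally on the contracted and deleted graphs.

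Next I would reduce $f_1$ and then $f_2$. Since $D^5 = \pm P_1 P_2$ with each $P_i$ linear in every remaining edge variable, the factorization of $D^5$ in $a_{f_1}$ is read off directly as $D^5 = \pm(P_1^c + a_{f_1}P_1^d)(P_2^c + a_{f_1}P_2^d)$, where the superscripts denote contraction and deletion of $f_1$, all four pieces being Dodgson polynomials by Proposition \ref{prop contr del}. The reduction rule then gives $D^6 = \pm(P_1^c P_2^d - P_1^d P_2^c)$, a determinantal sum of four products of Dodgsons. Expanding each of these four Dodgsons as linear in $a_{f_2}$ presents $D^6$ as a quadratic in $a_{f_2}$, and the task is to show that this quadratic factors as $(A'a_{f_2}+B')(C'a_{f_2}+D')$ with $A',B',C',D'$ expressible as Dodgson polynomials of $G$, so that $D^7 = \pm(A'D' - B'C')$ takes the claimed form as a product of two factors, each linear in each remaining edge variable.

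The main obstacle is this last factorization step. Because $D^6$ arose as a $2\times 2$ determinant whose entries are Dodgson polynomials each linear in $a_{f_2}$, its factorization reduces to a Pl\"ucker-type identity for Dodgson polynomials, which I would hope to establish either algebraically, from the matrix definition in Proposition \ref{prop matrix tree} via Jacobi-style determinant manipulations (in the spirit of Lemma 87 of \cite{Brbig}, which proves the order-independence of the $5$-invariant), or combinatorially via Proposition \ref{Psi to Phi prop} by exploiting the fact that $v_1$ and $v_2$ become isolated after removing their five incident edges, forcing each of them to occupy a singleton part in every partition contributing to the relevant spanning forest sums and so collapsing the otherwise generic bilinear combination into a product. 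This is the point where the $3$-cut at $v_2$ enters in an essential way, since the constrained partitions have to be compatible with the $2$-valent structure that $v_2$ acquires in $G\setminus e_0$. Once the identity is in hand, assembling $D^7$ in the claimed factored form and verifying multilinearity in each remaining edge variable is a direct bookkeeping exercise.
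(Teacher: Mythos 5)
Your opening step is correct: starting from a $D^4$ on the four non-bridge edges and reducing the bridge $e_0$, the cut set at $v_1$ kills the deletion term via Proposition \ref{prop triangle and 3}, so the $5$-invariant of the five special edges is indeed $\pm\Psi^{12,34}_{G/e_0}\,\Psi^{13,24}_{G\setminus e_0}$, a product of two Dodgson polynomials. The genuine gap is the last step. After reducing $f_1$ your $D^6=\pm(P_1^cP_2^d-P_1^dP_2^c)$ is genuinely quadratic in $a_{f_2}$, and the assertion that it factors as $(A'a_{f_2}+B')(C'a_{f_2}+D')$ is exactly the content of the proposition; you defer it to a hoped-for ``Pl\"ucker-type identity'' and never prove it. A bilinear combination of that shape does not factor for a generic graph --- that is precisely why denominator reduction can terminate --- and the structural input you invoke at that point (the $3$-cut at $v_2$, the $2$-valence of $v_2$ in $G\setminus e_0$) is never converted into an argument. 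By saving both arbitrary edges for last you have also taken on the extra burden of showing that the step from $D^6$ to $D^7$ is even defined in your order, which is not automatic: the paper explicitly notes that whether reduction can continue depends on the edge order.

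The paper's proof avoids this entirely by choosing the order the other way around: it takes the $5$-invariant on $\{2,3,5,i,j\}$, i.e.\ the two \emph{arbitrary} edges together with both non-bridge edges at $v_1$ and one non-bridge edge at $v_2$, and saves the bridge $1$ and the remaining edge $4$ at $v_2$ for the last two reductions. At each of those steps Proposition \ref{prop triangle and 3} (plus a short connectivity argument for $\Psi^{235,ij5}_{G,1}$) shows the relevant factors cannot cut the edge being reduced, so $D^5$ and $D^6$ have no quadratic term in $a_1$, respectively $a_4$, and both reductions are free; the factorization of $D^7$ is then exhibited directly, because after cutting the bridge $v_2$ is $2$-valent, giving $\Psi^{125i,135j}_{G,4}=\Psi^{124i,134j}_{G,5}=\Psi^{i,j}_H$ (with $H$ the graph $G$ minus $v_1$, $v_2$ and their incident edges) as a common factor, so that $D^7=(\Psi^{234,4ij}_{G,15}-\Psi^{235,ij5}_{G,14})\Psi^{i,j}_H$. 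The idea missing from your proposal is precisely this choice of reduction order, which replaces the unproven factorization identity by two trivial (linear) reduction steps and an explicit common Dodgson factor.
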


If $G$ is a $4$-point graph in $\phi^4$ and $K$ is as in Definition \ref{def K}, then $G$ satisfying the hypotheses of the proposition implies that $K$ contains a triangle.  Thus, in view of Schnetz' completion results \cite{Sphi4}, is it not surprising that this shape yields a free factorization, however it is not a consequence of this since the missing edges from $K$ are not integrated.

Note also that a special case of Proposition \ref{prop new free} is Lemma 55 of \cite{BrS}.

\begin{proof}
  Label the edges of $G$ incident to $v_1$ and $v_2$ as illustrated
  \[
  \includegraphics{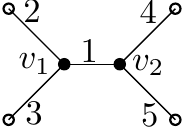}
  \]
  where the rest of the graph attaches at the white vertices.
  Let $i$ and $j$ be any other two edges of $G$.  Then
  \[
  {}^5\Psi_G(2,3,5,i,j) = \Psi^{23,ij}_{G,5}\Psi^{25i,35j}_{G} - \Psi^{235,ij5}_G\Psi^{2i,3j}_{G,5}
  \]
  By Proposition \ref{prop triangle and 3}, edge $1$ cannot be cut in the first factor of either term, so
  \[
  D^6_G(1,2,3,5,i,j) = \Psi^{23,ij}_{G,15}\Psi^{125i,135j}_{G} - \Psi^{235,ij5}_{G,1}\Psi^{12i,13j}_{G,5}
  \]
  Again by Proposition \ref{prop triangle and 3}, edge $4$ cannot be cut in $\Psi^{125i,135j}_{G}$.  Furthermore, edge $4$ cannot be cut in $\Psi^{235,ij5}_{G,1}$ because when $2,3$ and $5$ are cut, then cutting $4$ would disconnect the graph giving $0$ in the same way as in the proof of Proposition \ref{prop triangle and 3}.  Therefore
  \[
  D^7_G(1,2,3,4,5,i,j) = \Psi^{234,4ij}_{G,15}\Psi^{125i,135j}_{G,4} - \Psi^{235,ij5}_{G,14}\Psi^{124i,134j}_{G,5}
  \]
  Next observe that
  \[
  \Psi^{125i,135j}_{G,4} = \Psi^{124i,134j}_{G,5}
  \]
  since after cutting $1$, $v_2$ is a two valent vertex, and hence it makes no difference whether we cut $4$ and contract $5$ or cut $5$ and contract $4$.  In fact both these Dodgson polynomials are equal to 
  \[
  \Psi^{i,j}_{H}
  \]
  where $H$ is $G$ with vertices $v_1$ and $v_2$ and their incident edges removed. 
  Thus
  \[
  D^7_G(1,2,3,4,5,i,j) = (\Psi^{234,4ij}_{G,15}- \Psi^{235,ij5}_{G,14})\Psi^{i,j}_H
  \]
\end{proof}

It is the experience of the author that the sources of free factorizations from this subsection explain all the free factorizations which occur before the endgame of denominator reduction.  As one example, lets look at a very important graph from \cite{BrS}.

\begin{example}\label{eg GBS}
In \cite{BrS} Brown and Schnetz consider the graph
\[
G_{BS} = \raisebox{-2cm}{\includegraphics{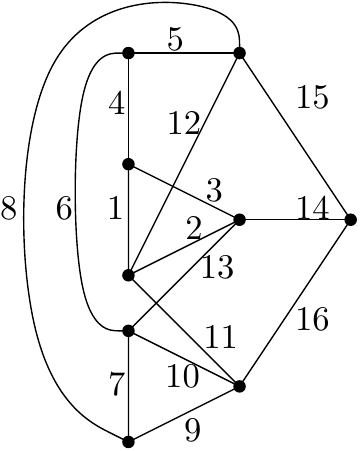}}
\]
They denominator reduce edges $1,\ldots, 10$ to get
\begin{equation}\label{eq D10}
  D^{10}_{G_{BS}}(1,\ldots, 10) = (Aa_{11}+B)(Ca_{11}+D)
\end{equation}
where
\begin{align*}
  A & = Q + a_{12}a_{13} + a_{16}a_{12} + a_{14}a_{12} + a_{15}a_{13} + a_{14}a_{13} \\
  B & = a_{13}(Q+a_{16}a_{12} + a_{14}a_{12}) \\
  C & = -a_{13}a_{15} \\
  D & = a_{12}(Q+a_{13}a_{16}) \\
  Q & = a_{14}a_{15}+a_{15}a_{16}+ a_{14}a_{16}
\end{align*}
Then $D^{11}_{G_{BS}}(1,\ldots, 11)$ does not factor, so they proceed one step further by a clever trick which is the subject of the next subsection.

Now let's see why these first 11 denominator reduction steps work out in view of the results of this subsection.
For each of the triangles with a 3-valent vertex as marked in full edges in Figure \ref{fig BSreds} we can take any two of the edges among our initial 5 edges and get the other 2 from triangles and 3-valent vertices.  This takes care of 8 of the initial integrations.  Taking either $5$ or $6$ as the fifth of the initial 5 edges we also get the other from the 3-valent vertex.  We get one free factorization because of the shape made by edges $1,3,4,5,6$, so $D_{10}(1,\ldots, 10)$ factors and we may choose any of the remaining edges to proceed with.
\begin{figure}
\includegraphics{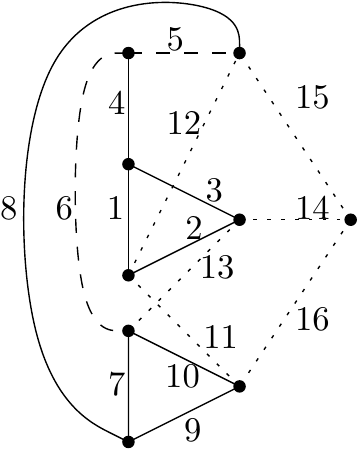}
\caption{The original Brown Schnetz K3 graph marked to illustrate the initial reductions}\label{fig BSreds}
\end{figure}\end{example}

\begin{example}
  Another interesting example is 
  \[
  \includegraphics{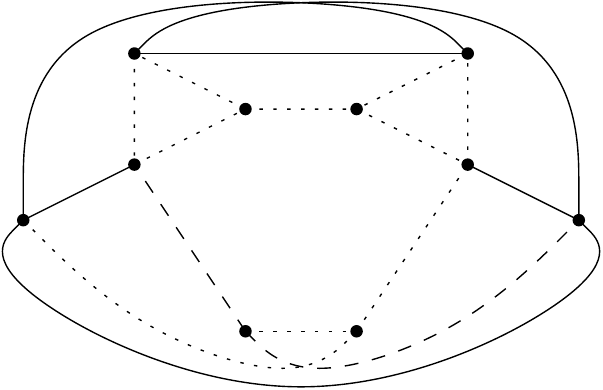}
  \]
  This graph is a decompletion of $P_{9,172}$ which appears in version 1 of \cite{Sphi4}.  Using triangles and 3-valent vertices we can choose an initial 5-invariant so as to denominator reduce all of the dotted edges.  Note that we have, in the process, reduced two 3-valent vertices which are joined by an edge.  

As our free reduction, reduce one of the dashed edges.  This also gives the other dashed edge for free on account of the 3-valent vertex.  Once again we have reduced two 3-valent vertices which are joined by an edge.  Reducing as described we obtain the two factors
\begin{align*}
&zvu+wxu+yzx+ywv+wzu+yzu+yxu+zvx\\
  & +ywz+wvx+yvu+wvu+zxu+yvx+ywx+wzv
\end{align*}
and
\[
zvu+wxu+wzu+yzu+ywz+wvu+zxu+wzv
\]
The monomials of the second are contained in the first, but if we reduce $(Aa+B)((A+C)a+(B+D))$ we get $A(B+D) - B(A+C) = AD-BC$ which is the same as reducing $(Aa+B)(Ca+D)$ so it suffices to consider only the two factors
\begin{align*}
&yzx+ywv+yxu+zvx+wvx+yvu+yvx+ywx\\
&zvu+wxu+wzu+yzu+ywz+wvu+zxu+wzv
\end{align*}
One more reduction leaves us with 5 variables.
\end{example}

\subsection{The Brown-Schnetz change of variables}

Continuing Example \ref{eg GBS} following the analysis of Brown and Schnetz of $G_{BS}$ from \cite{BrS}, next they make the change of variables
\[
  a_{12} \mapsto a_{12}Q \quad a_{13} \mapsto a_{13}Q
\]
Let $\widetilde{D}^{11}$ be $D^{11}_{G_{BS}}$ after this change of variables.  Then
\begin{align*}
\widetilde{D}^{11} & = Q^3((1+Qa_{12}a_{13}+a_{16}a_{12}+a_{14}a_{12}+a_{15}a_{13}+a_{14}a_{13})a_{12}(1+a_{13}a_{16}) \\
& \quad \quad + a_{13}^2a_{15}(1+a_{16}a_{12}+a_{14}a_{12}))
\end{align*}
Let $R$ be the polynomial $\widetilde{D}^{11}/Q^3$.
Viewing this back in the partially integrated Feynman integral, we have $\widetilde{D}^{11}$ in the denominator and some polylogarithms in the numerator along with a factor of $Q^2$ from the change of variables.  Cancelling the $Q^2$ the denominator becomes $QR$.

$Q$ is linear in all its variables and $R$ is linear in $a_{14}$ and $a_{15}$ so we can hope to continue denominator reducing.  At this stage in the argument we can't be certain whether or not we can continue -- it will depend on the numerator.  Rather than look into the form of the numerator, let's return to the approach of \cite{BrS} by considering the $c_2$ invariant. 
The question, then, is whether or not $QR$ has the same point counts modulo $p$ as $D^{11}_{G_{BS}}$.  The answer is that it does not, but that they differ only by a constant modulo $p$.  This requires the following two facts

First before the change of variables Brown and Schnetz take the opportunity to dehomogenize by setting $a_{16}=1$, so they need (see \cite{BrS} Lemma 58 for slightly less)
\begin{lemma}\label{lem dehom}
$\left[\left.D^{11}_{G_{BS}}\right|_{a_{16}=0}\right]_q$ modulo q is constant as a function of $q$.
\end{lemma}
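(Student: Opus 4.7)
My plan is to substitute $a_{16}=0$ directly into the explicit expressions for $A,B,C,D$ from Example \ref{eg GBS}, compute $D^{11}_{G_{BS}}\big|_{a_{16}=0} = \pm(AD-BC)\big|_{a_{16}=0}$, and extract enough factorization structure that the resulting $\mathbb{F}_q$-point count is visibly a polynomial in $q$. Since any polynomial in $q$ has a well-defined constant term, its reduction modulo $q$ will be a fixed integer independent of $q$, which is exactly what the lemma asks for.

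First I would substitute to get $Q|_{a_{16}=0} = a_{14}a_{15}$, simplify the individual factors, and expand $AD - BC$. I expect a visible common factor $a_{14}\,a_{15}$, so that
\[
  D^{11}_{G_{BS}}\big|_{a_{16}=0} = a_{14}\,a_{15}\,R,
\]
where $R$ is linear in $a_{14}$ with each of the two coefficients a product of linear forms in $a_{12},a_{13},a_{15}$.  Writing $R = u\,a_{14} + v$ with $u$ and $v$ each a product of linear forms is the key structural input and is what makes everything tractable.

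With that form in hand, I would compute $[D^{11}_{G_{BS}}|_{a_{16}=0}]_q$ by inclusion--exclusion on $V(a_{14})\cup V(a_{15})\cup V(R)$ in $\mathbb{A}^4$.  The counts of $V(a_{14})$, $V(a_{15})$ and their intersection are immediate.  For any piece involving $V(R)$, the standard identity $[u\,a_{14}+v]_q = q\,[u,v]_q + (q^{n-1}-[u]_q)$ replaces the count of $V(R)$ by counts of $V(u)$ and $V(u,v)$ living in one fewer variable, and the restrictions $R|_{a_{14}=0}$ and $R|_{a_{15}=0}$ are each explicit products of linear forms whose vanishing loci decompose as unions of hyperplanes handled by one further round of inclusion--exclusion.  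Every ingredient is then a polynomial in $q$, so the assembled count is itself a polynomial in $q$, and its reduction modulo $q$ is the claimed $q$-independent constant.  The main obstacle is purely organizational, namely keeping track of signs and multiplicities through the nested inclusion--exclusion; no deeper graph-theoretic or arithmetic input seems necessary once the factorization $a_{14}a_{15}R$ with $R$ linear in $a_{14}$ is in hand.
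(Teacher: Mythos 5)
Your proposal is correct, but it takes a genuinely different route from the paper. The paper's proof is structural and one line long: setting $a_{16}=0$ is the same as contracting edge $16$ (Proposition \ref{prop contr del}), so $D^{11}_{G_{BS}}\big|_{a_{16}=0} = D^{11}_{G_{BS}/e_{16}}$, and the contracted graph is simple enough to denominator reduce all the way down, whence its point count mod $q$ (its $c_2$) is constant by the general machinery. You instead work directly with the explicit $A,B,C,D$ of Example \ref{eg GBS}, and your key structural claim does check out: with $Q|_{a_{16}=0}=a_{14}a_{15}$ one finds $\pm(AD-BC)\big|_{a_{16}=0} = a_{14}a_{15}\bigl(u\,a_{14}+v\bigr)$ with $u = a_{12}(a_{12}+a_{13}+a_{15})$ and $v = a_{13}(a_{12}+a_{13})(a_{12}+a_{15})$, both products of linear forms, and the restrictions $R|_{a_{14}=0}$ and $R|_{a_{15}=0}$ also factor into linear forms, so the inclusion--exclusion plus the fibration identity $[u\,a_{14}+v]_q = q[u,v]_q + (q^{n-1}-[u]_q)$ reduces everything to counting unions and intersections of hyperplanes. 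Two small points you should make explicit: the subspace counts $q^{\dim}$ are characteristic-independent only because the relevant $0$--$1$ coefficient matrices have the same rank in every characteristic (true here, but it needs a word), and the ambient space for the count should be the four surviving variables, as you take it. What each approach buys: yours is self-contained, needs no graph-level input, and actually produces the polynomial in $q$ (hence the explicit constant); the paper's argument is shorter, and exposes the reason your factorizations into linear forms appear at all --- namely that $D^{11}|_{a_{16}=0}$ is itself the denominator of the fully reducible graph $G_{BS}/e_{16}$ --- and it generalizes immediately to any variable whose contraction leaves a fully reducible graph.
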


Then as the change of variables is an isomorphism off of $V(Q)$ we have that
\[
[D^{11}_{G_{BS}}]_q - [Q,D^{11}_{G_{BS}}]_q = [R]_q - [Q,R]_q
\]
Two of these are relatively easy to control 
\begin{lemma}[\cite{BrS} Lemma 59]\label{lem other lem}
  $[Q,D^{11}_{G_{BS}}]_q$ and $[R]_q$  modulo q are constant as functions of $q$.
\end{lemma}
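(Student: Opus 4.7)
Both quantities are point counts of explicit polynomial systems in the four affine variables $a_{12}, a_{13}, a_{14}, a_{15}$ (after dehomogenizing $a_{16}=1$), so the proof reduces to two explicit finite computations. I would organize them around the following elementary \emph{linear reduction}: if $f \in \mathbb{F}_q[x, y_1, \ldots, y_{n-1}]$ is linear in $x$, say $f = xA + B$ with $A, B$ not involving $x$, then
\[
  [f]_q \;=\; q\,[A,B]_q \;+\; \bigl(q^{n-1} - [A]_q\bigr),
\]
so $[f]_q \equiv -[A]_q \pmod q$. Iterating this reduction peels off variables one at a time, and in our setting drives the calculation down to a variety whose point count is visibly polynomial in $q$ with $q$-independent constant term.

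For $[R]_q$, observe that $R$ is linear in $a_{15}$: $Q$ is itself linear in $a_{15}$, and the only other occurrence of $a_{15}$ in $R$ is a single explicit factor. Writing $R = a_{15}U+V$, the reduction gives $[R]_q \equiv -[U]_q \pmod q$. A direct expansion and rearrangement shows that $U$ factors as a product of a coordinate variable with two simple bilinear polynomials of the form $1 + a_i(1+a_j)$. Each of these three factors admits a one-step linear reduction, and a short inclusion--exclusion combines them into an explicit polynomial in $q$ whose reduction modulo $q$ is a fixed integer independent of $q$.

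For $[Q, D^{11}_{G_{BS}}]_q$, first eliminate $a_{14}$ using $Q=0$. With $a_{16}=1$ we have $Q = a_{14}(a_{15}+1)+a_{15}$, so $a_{14}$ is uniquely determined on $V(Q)$ whenever $a_{15} \neq -1$; the excluded locus $a_{15}=-1$ contributes nothing since there $Q=-1$. Substituting $a_{14} = -a_{15}/(a_{15}+1)$ into $D^{11}_{G_{BS}}$ (as given by expanding $AD-BC$ from \eqref{eq D10}) and clearing the resulting power of $(a_{15}+1)$ produces a polynomial in $a_{12}, a_{13}, a_{15}$ that is visibly divisible by $a_{12}a_{13}$, with cofactor linear in $a_{12}$. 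One more linear reduction in $a_{12}$, together with a small case split over the coordinate loci $\{a_{12}=0\}$, $\{a_{13}=0\}$, $\{a_{15}=0\}$, expresses $[Q, D^{11}_{G_{BS}}]_q$ as an explicit polynomial in $q$ whose reduction modulo $q$ is again constant.

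The main obstacle is bookkeeping rather than ideas. Neither step needs any input beyond the linear-reduction identity above, but the case analysis for $[Q, D^{11}_{G_{BS}}]_q$ involves several overlapping loci and must be carried out carefully to avoid arithmetic slips. The identity $[D^{11}_{G_{BS}}]_q - [Q, D^{11}_{G_{BS}}]_q = [R]_q - [Q, R]_q$ (which is just the assertion that the Brown--Schnetz change of variables is an isomorphism off $V(Q)$) provides a useful consistency check between the two computations, but does not obviate either of them since it involves the additional unknown $[Q,R]_q$.
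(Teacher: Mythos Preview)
Your plan is sound and matches the spirit of what the paper does, but note that the paper does not actually prove this lemma: it is cited from \cite{BrS}, and the paper's own discussion is confined to one structural observation plus the remark that ``the remainder of Lemma~\ref{lem other lem} is a computation for which this author has no particular insight.'' So there is nothing to compare against except that one observation.

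That observation is slightly sharper than yours for $[R]_q$: the paper notes that $R$ is linear in \emph{both} $a_{14}$ and $a_{15}$ (from the explicit dependence of $B,C,D$ on these variables), so two linear reductions give $[R]_q \equiv [\text{coeff.\ of }a_{14}a_{15}\text{ in }R]_q \pmod q$, a two-variable count. You instead reduce once in $a_{15}$ and then factor the three-variable coefficient $U$. Both routes work; in fact $U = a_{13}\bigl(1+a_{12}(1+a_{14})\bigr)\bigl(a_{12}+a_{13}+a_{12}a_{13}\bigr)$, so your factorization claim is correct, though the third factor is not literally of the shape $1+a_i(1+a_j)$ as you wrote. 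The paper's two-reduction route lands on $a_{12}a_{13}(a_{12}+a_{13}+a_{12}a_{13})$ directly, which is perhaps tidier.

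For $[Q,D^{11}_{G_{BS}}]_q$ the paper offers nothing, and your plan (parametrize $V(Q)$ by solving for $a_{14}$, observe that $D^{11}$ restricted to $Q=0$ becomes $a_{12}a_{13}$ times something linear in $a_{12}$, then reduce) is correct and is exactly the sort of direct computation the paper defers. I checked: on $Q=0$ with $a_{16}=1$ one has $AD-BC = a_{12}a_{13}\bigl[a_{12}(a_{13}+1+a_{14}) + a_{13}(a_{15}+a_{14}) + a_{13}a_{15}(1+a_{14})\bigr]$, confirming your divisibility and linearity claims.
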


Brown and Schnetz then proceed to use this to show that this particular graph is a counterexample to Kontesevich's conjecture that all such graphs should have $c_2$ invariants which are polynomial in $q$.  Furthermore, taking the result of denominator reducing $QR|_{a_{16}=1}$ and changing variables one more time they are able to recognize this as defining a particular K3 surface.

\medskip

Now let's see if we can get some insight into this change of variables and why it worked.  The key is that there is a 3-valent vertex of the graph which has not been touched by the reductions so far -- a spare 3-valent vertex.  

\begin{figure}
\includegraphics{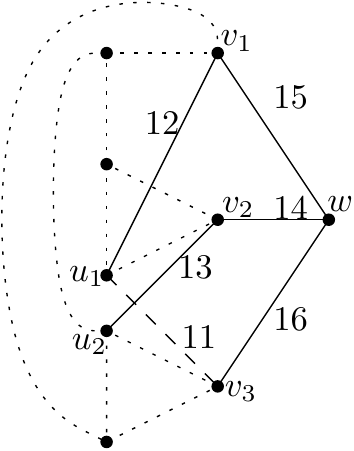}
\caption{The original Brown Schnetz K3 graph marked to illustrate the change of variables}\label{fig BScov}
\end{figure}

Figure \ref{fig BScov} illustrates what remains of the graph.  Using the vertex labels from Figure \ref{fig BScov}, note that $Q$ is the Kirchhoff polynomial of the right hand side of the graph, edges $14$, $15$, and $16$, with vertices $v_1$, $v_2$, and $v_3$ identified.  Equivalently $Q$ is the spanning forest polynomial of the same subgraph defined by the partition $\{v_1\},\{v_2\},\{v_3\}$.

The change of variables consists of scaling each of the other edges (those remaining on the left hand side of Figure \ref{fig BScov}) by $Q$. 

Each denominator is a sum of products of pairs of Dodgson polynomials and so is a sum of products of pairs of spanning forest polynomials.  
\begin{definition}
  Given a product $\Phi^{P_1}\Phi^{P_2}\cdots \Phi^{P_k}$ of spanning forest polynomials, say that the \emph{number of colours} of the product is $|P_1|+|P_2|+\cdots+|P_k|$ where $|P_i|$ is the number of parts of the partition $P_i$.
\end{definition}
By homogeneity all terms of the denominator have the same number of colours, so we can speak of the number of colours of the denominator.
By this definition $\Psi^2$ has two colours. 

Observe that a spanning forest polynomial $\Phi^P$ of a connected graph $G$ has degree $\ell(G) + |P| - 1$ since each cycle must be cut to get a tree and then the component trees must be detached to obtain a forest.

\begin{prop}
Let $G$ be a primitive $\phi^4$ graph.

If $D^j_G$ has $c$ colours then $D^{j+1}_G$ has $c+1$ colours when viewed on the same underlying graph.

If $G - \{e_{i_1},\ldots,e_{i_k}\}$ has an isolated vertex $v$ and $D^k_G(i_1,\ldots, i_k)\neq 0$ has $c$ colours, then viewing the spanning forest polynomials defined with respect to the graph with $v$ removed, $D^k_{G\setminus v}(i_1,\ldots,i_k)$ has $c-2$ colours.
\end{prop}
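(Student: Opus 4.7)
The plan is to reduce both claims to a single degree count on spanning forest polynomials. Recall that for a connected graph $H$, a spanning forest polynomial $\Phi^P_H$ is the sum of the monomials $\prod_{e\notin F}a_e$ over spanning forests $F$ of $H$ consistent with $P$, and each such forest has $|V(H)|-|P|$ edges; hence $\Phi^P_H$ is homogeneous of degree $|E(H)|-(|V(H)|-|P|)=\ell(H)+|P|-1$. A product of a pair $\Phi^{P_1}\Phi^{P_2}$ with total colour count $c=|P_1|+|P_2|$ is therefore homogeneous of degree $2\ell(H)+c-2$. For a fixed underlying graph $H$, the degree of such a product pins down the colour count.

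For the first statement, I would suppose $D^j_G$ is represented as a signed sum of products of pairs of spanning forest polynomials on the same connected underlying graph $H$, so that $D^j_G$ is homogeneous of degree $d=2\ell(H)+c-2$. The denominator reduction step factors $D^j_G=(Aa_\ell+B)(Ca_\ell+D)$ and defines $D^{j+1}_G=\pm(AD-BC)$. Collecting powers of $a_\ell$, the coefficient $AC$ of $a_\ell^2$ has degree $d-2$ and the constant term $BD$ has degree $d$, forcing $\deg A=\deg C$ and $\deg B=\deg D=\deg A+1$. Consequently both $AD$ and $BC$ are homogeneous of degree $d-1=2\ell(H)+(c+1)-2$. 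Reading this degree back through the spanning forest degree formula gives the new colour count $c+1$.

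For the second statement, let $H=G\setminus\{e_{i_1},\ldots,e_{i_k}\}$ with isolated vertex $v$. The plan is to show that every partition $P$ contributing a nonzero $\Phi^P_H$ must have $\{v\}$ as a part: since $v$ has no incident edges in $H$, any spanning forest $F\in\mathcal{F}_{P,H}$ must contain $v$ as a singleton component, and by the definition of $\mathcal{F}_{P,H}$ each connected component of $F$ hosts the vertices of exactly one part of $P$, so $\{v\}\in P$. Then, as already noted just before Proposition \ref{Psi to Phi prop}, $\Phi^P_H=\Phi^{P-\{v\}}_{H\setminus v}$ with the right-hand side carrying one fewer colour. Applying this simultaneously to both factors in each spanning forest pair in the representation of $D^k_G(i_1,\ldots,i_k)$ reduces the total colour count by $2$.

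The step I expect to require the most care is the first: one has to be comfortable that the factorization $(Aa_\ell+B)(Ca_\ell+D)$ is performed on a polynomial already homogeneous in the original edge variables, so that the degree accounting on $A,B,C,D$ is forced rather than assumed. The second statement is essentially immediate once the preservation of $\{v\}$ as a part is recorded, because it simply translates the identity $\Phi^P_H=\Phi^{P-\{v\}}_{H\setminus v}$ into a count on colours of pairs.
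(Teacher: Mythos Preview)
Your argument is essentially the paper's own proof, spelled out in more detail: both reduce the first claim to the observation that one denominator-reduction step lowers the total degree by~$1$, and both reduce the second claim to the fact that an isolated vertex must sit in its own part in each factor. One slip to fix: from $\deg(AC)=d-2$ and $\deg(BD)=d$ you only get $\deg A+\deg C=d-2$ and $\deg B+\deg D=d$, not $\deg A=\deg C$; the correct step is to take the two factors homogeneous (which is always possible for a factorization of a homogeneous polynomial), so $\deg B=\deg A+1$ and $\deg D=\deg C+1$, whence $\deg(AD)=\deg(BC)=\deg A+\deg C+1=d-1$ regardless of whether $\deg A=\deg C$.
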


\begin{proof}
  Each step of denominator reduction decreases the degree by 1, so the corresponding spanning forests each have one fewer edge and hence one more component tree.  This proves the first statement.

  Since $v$ is isolated in $G-\{e_{i_1},\ldots,e_{i_k}\}$, $v$ must have its own colour in each spanning forest polynomial contributing to $D^k_G$.  Thus removing $v$ and this colour in each spanning spanning forest polynomial gives the same polynomial $D^k_{G\setminus v}$ but with two fewer colours. 
\end{proof}

Returning to the Brown Schnetz calculation,
${}^5\Psi$ has 7 colours when viewed on all the vertices of the graph.
In integrating the first 11 edges we isolate 3 vertices and so, by the previous proposition, the number of colours of $D^{11}_{G_{BS}}$, with isolated vertices removed, is $7+(11-5)-2\cdot 3 = 7$.  

Recall that $Q$ is the spanning forest polynomial defined by $\{v_1\},\{v_2\},\{v_3\}$ on the subgraph consisting of edges $14$, $15$, and $16$
 and the change of variables consists of scaling each of the other remaining edges by $Q$.

Let $\Phi$ be any spanning forest polynomial involved in $D^{11}_{G_{BS}}$.  Let $c$ be the number of colours of $\Phi$ when viewed on the graph with isolated vertices removed.  $w$ has not been involved in the denominator reduction so $w$ is not in the partition of $\Phi$.  Thus $c\leq 5$.   

If $c=5$ then $e_{12}$ and $e_{13}$ both must be cut which contributes $Q^2$ after the change of variables, and $v_1$, $v_2$, and $v_3$ are different colours which contributes another $Q$, so there is a factor of $Q^3$ after the change of variables.  If $c=4$ then either $e_{12}$ and $e_{13}$ are again cut, or exactly one of them is not cut and $v_1$, $v_2$, and $v_3$ are different colours; in both cases there is a factor of $Q^2$ after the change of variables.  If $c=3$ then either at least one of $e_{12}$ and $e_{13}$ is cut or $v_1$, $v_2$, and $v_3$ are different colours; in both cases there is a factor of $Q$ after the change of variables.  

Thus all pairs with 7 colours total lead to a factor of $Q^3$ after the change of variables, which explains the factorization of $\widetilde{D}^{11}$.

The general result to which the above calculation is a special case would be

\begin{theorem}\label{thm cov}
Let $G$ be a primitive 4-point graph in $\phi^4$.  Suppose the edges of $G$ can be partitioned into 3 parts, $G_1$, $G_2$, and $G_3$.  Suppose
\begin{itemize}
  \item $G_2\cup G_3$ is connected.
  \item $G_1\cup G_2$ and $G_3$ share $n$ vertices; call them $v_1,\ldots, v_n$.
  \item There are $v$ vertices involved in $G_1$ but not $G_2\cup G_3$.
  \item The edges of $G_1$ can be denominator reduced in the graph $G$, and the resulting denominator $D$ can be written as a sum of products of pairs of spanning forest polynomials of $G$ where the parts do not involve any vertices of $G_3$ other than possibly $v_1, \ldots, v_n$.
  \item 
    $
    2\ell(G_2\cup G_3)-2\ell(G_3)+|G_1|-|G_2|-2v-2n+3\geq 0
    $
\end{itemize}
Let $Q$ be the spanning forest polynomial for $G_3$ for the partition $\{v_1\},\ldots,\{v_n\}$.  Let $\widetilde{D}$ be $D$ with the variables of $G_2$ scaled by $Q$.  Then $Q^{|G_2|+1}$ divides $\widetilde{D}$.
\end{theorem}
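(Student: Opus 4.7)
The plan is to generalize the colour-counting argument sketched for $G_{BS}$ in Example \ref{eg GBS}. By hypothesis $D = \sum \Phi^{P_1}_H \Phi^{P_2}_H$, where $H = G_2 \cup G_3$ is the graph remaining after removing the $v$ vertices isolated by the denominator reduction of the $G_1$ edges, and each $P_i$ is supported on vertices of $V(H)$ which are not interior to $G_3$. A degree check, using $\deg D = 2\ell(G) - |G_1|$ on one side and $\deg(\Phi^{P_1}_H \Phi^{P_2}_H) = 2\ell(H) + |P_1|+|P_2| - 2$ on the other, combined with $\ell(G) - \ell(H) = |G_1|-v$ (Euler's formula applied to the vertex-count hypothesis) fixes the total colour count at $|P_1|+|P_2| = |G_1| - 2v + 2$.

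The central technical tool is the decomposition of each $\Phi^P_H$ according to the partition of $\{v_1,\ldots,v_n\}$ induced on the shared vertices by the $G_3$-restriction of a spanning forest of $H$:
\[
\Phi^P_H = \sum_\pi \Phi^\pi_{G_3} \cdot \Phi^{P \vee \pi}_{G_2},
\]
where $\pi$ ranges over set partitions of $\{v_1,\ldots,v_n\}$ and $P\vee\pi$ is the join in the partition lattice on $V(G_2)$. I would first establish this identity by sorting spanning forests of $H$ by the structure they induce on the shared vertices, the factorization of the edge-weight monomial into $G_2$ and $G_3$ parts being immediate. Under the change of variables $a_e \mapsto Q\, a_e$ for $e \in G_2$ the $G_3$-factor is preserved, while each monomial of $\widetilde{\Phi^{P\vee\pi}_{G_2}}$ associated to a spanning forest $F$ of $G_2$ acquires a factor $Q^{|G_2|-|F|} = Q^{\ell(G_2) - c(G_2) + |P\vee \pi|}$. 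A comparison of degrees in the $G_3$-variables shows that $\Phi^\pi_{G_3}$ equals $Q$ precisely when $\pi$ is the discrete partition and is otherwise not divisible by $Q$.

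Assembling these observations, a typical term of $\widetilde{D}$ carries a power of $Q$ equal to $[\pi_1 \text{ discrete}] + [\pi_2 \text{ discrete}]$ from the $G_3$-factors plus $2(\ell(G_2) - c(G_2)) + |P_1 \vee \pi_1| + |P_2 \vee \pi_2|$ from the $G_2$-scaling. I would then minimize this over all admissible $\pi_1, \pi_2$ and $P_1, P_2$ with $|P_1|+|P_2|$ equal to the fixed total $|G_1| - 2v + 2$, using partition-lattice bounds such as $|P_i \vee \pi_i| \ge |P_i| + |\pi_i| - n$ (each merge shrinks the part count by at most one) to track how coarsening $\pi_i$ trades off against the gain from a discrete $\pi_i$. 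The numerical hypothesis $2\ell(G_2\cup G_3) - 2\ell(G_3) + |G_1| - |G_2| - 2v - 2n + 3 \ge 0$ should rearrange precisely into the statement that the worst case yields at least $|G_2|+1$.

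The main obstacle I anticipate is the case where both $\pi_1, \pi_2$ are coarse, so no $Q$ is gained from the $G_3$-factors and the full $|G_2|+1$ must come from the $G_2$-scaling. If the term-by-term count ever falls short, one must combine contributions from different $\pi$'s and exploit algebraic relations among the $\Phi^\pi_{G_3}$'s; these are spanning forest polynomials of the small fixed graph $G_3$ and satisfy Pl\"ucker-style identities which express coarser $\Phi^\pi_{G_3}$ in terms of $Q$ plus corrections. Making this cancellation precise, and verifying that the resulting missing $Q$-powers are always picked up by cross-$\pi$ terms, is the delicate heart of the argument.
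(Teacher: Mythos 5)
Your setup matches the paper's in outline (write $D$ as a sum of products of pairs of spanning forest polynomials on $G_2\cup G_3$, fix the total colour count at $|G_1|+2-2v$, and grade each polynomial by the partition $\pi$ of $\{v_1,\ldots,v_n\}$ induced by the $G_3$-restriction of the forests), but the argument does not close as you have arranged it, and the step you yourself flag as the ``delicate heart'' is exactly where the gap sits. No cancellation between different $\pi$'s and no Pl\"ucker-style identities are needed: the decisive point is an exact homogeneity count. If $\Phi$ has $c$ colours viewed on $G_2\cup G_3$, it is homogeneous of total degree $\ell(G_2\cup G_3)+c-1$, and the group of monomials whose induced partition has $k$ parts is homogeneous of degree exactly $\ell(G_3)+k-1$ in the $G_3$-variables (cut every cycle of $G_3$ and then exactly $k-1$ further $G_3$-edges), hence of degree exactly $\ell(G_2\cup G_3)-\ell(G_3)+c-k$ in the $G_2$-variables. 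Scaling the $G_2$-variables by $Q$ therefore yields exactly that many powers of $Q$ from this graded piece, and when $k=n$ the piece picks up one extra power because $Q$ divides it by definition. So coarsening $\pi$ (passing from $k=n$ to $k\le n-1$) loses the bonus factor of $Q$ but is automatically repaid by the strictly larger $G_2$-degree: every graded piece of every $\Phi$ is divisible by $Q^{\ell(G_2\cup G_3)-\ell(G_3)+c-n+1}$, and adding the colours of the two factors in each product gives the claimed bound term by term, with the numerical hypothesis entering only at the end. Your accounting misses this exact trade-off because you compute the $G_2$-degree through spanning forests of $G_2$ together with the lossy lattice bound $|P_i\vee\pi_i|\ge|P_i|+|\pi_i|-n$; that slack is why your worst case (both $\pi_i$ coarse) appears to fall short and why you are driven to hope for cross-$\pi$ cancellations. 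With the exact degree count there is no shortfall to repair, so the missing idea is precisely this homogeneity bookkeeping.

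A secondary but genuine flaw is the identity $\Phi^P_H=\sum_\pi\Phi^\pi_{G_3}\,\Phi^{P\vee\pi}_{G_2}$, which is not correct as written. If two shared vertices lie in one block of $\pi$, a $G_2$-forest consistent with $P\vee\pi$ would have to connect them inside $G_2$, but then its union with the $G_3$-forest realizing $\pi$ contains a cycle; the correct $G_2$-side object is a spanning forest polynomial of $G_2$ with the shared vertices identified according to $\pi$ (together with a compatibility condition on the parts), not a forest polynomial consistent with the join. The paper avoids this issue entirely by never asserting any product formula: it only groups the monomials of each $\Phi$ by the induced partition and counts degrees of the graded pieces, which is all the argument requires.
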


Note that in the example of $G_{BS}$, $G_1$ consists of edges $1$ through $11$, $G_2$ consists of edges $12$ and $13$, and $G_3$ consists of edges $14$, $15$, and $16$.  

Note also that the assumption that the spanning forest polynomials contributing to $D$ only involve $v_1, \ldots v_n$ among the vertices of $G_3$ is natural.  This is because the other vertices of $G_3$ have not yet had any incident edges integrated, and so should not appear in any Dodgson polynomials or any spanning forest polynomials which have arisen in the usual way.  However, the assumption is necessary since we cannot rule out an unexpected factorization which might yield such anomalous partitions.

\begin{proof}
  Count colours as in the example.  Any 5-invariant of $G$ will have $7$ colours by the form of a 5-invariant.  Viewed on $G$, denominator reducing the edges of $G_1$ will give a denominator with $7+(|G_1|-5) = |G_1|+2$ colours.  Discarding each of the $v$ isolated vertices leaves $|G_1|+2-2v$ colours in $D$.  

Consider a spanning forest polynomial $\Phi$ which appears in $D$.  Suppose $\Phi$ has $c$ colours viewed on the graph $G_2\cup G_3$.  Then $\Phi$ has degree $\ell(G_2\cup G_3)+c-1$.  Let $\widetilde{\Phi}$ be $\Phi$ after scaling each variable of $G_2$ by $Q$.  

Every spanning forest of $\Phi$ induces a partition on $\{v_1,\ldots, v_n\}$ in the following way: restrict the forest to $G_3$ and partition $\{v_1,\ldots,v_n\}$ by putting vertices in the same part iff they are in the same tree in the restriction. Gather together those terms of $\Phi$ corresponding to the same partition.  

Let $P$ be such a partition with $k\leq n$ parts, $\Phi'$ the terms of $\Phi$ corresponding to $P$, and $\widetilde{\Phi}'$  those terms after the substitution.  Since $P$ has $k$ parts we must cut each cycle of $G_3$ and then $k-1$ more edges of $G_3$ and no more.  So $\Phi'$ is homogeneous of degree $\ell(G_3)+k-1$ in the $G_3$ variables.  Thus $\Phi'$ is homogeneous of degree
\[
\ell(G_2\cup G_3)+c-1 - \ell(G_3)-k+1 = \ell(G_2\cup G_3)-\ell(G_3) +c-k
\]
in the $G_2$ variables.  If $k=n$ then $Q|\Phi'$ by definition of $Q$.  So if $k=n$, $Q^{ \ell(G_2\cup G_3)-\ell(G_3) +c -n+1}$ divides $\widetilde{\Phi}'$ and if $k<n$, $Q^{ \ell(G_2\cup G_3)-\ell(G_3) +c -k}$ divides $\widetilde{\Phi}'$, so in all cases at least 
\[
\ell(G_2\cup G_3)-\ell(G_3) +c -n+1
\]
powers of $Q$ divide $\widetilde{\Phi}'$ and hence divide $\widetilde{\Phi}$.  Recalling that $D$ has $|G_1|+2-2v$ colours we get that at least
\begin{align*}
& 2\ell(G_2\cup G_3)-2\ell(G_3) + |G_1|+2-2v - 2n+2 \\
&  = 2\ell(G_2\cup G_3) - 2\ell(G_3) +|G_1|-2v-2n+4 \\
& \geq |G_2|+1
\end{align*}
powers of $Q$ divide $\widetilde{D}$, where the inequality is by hypothesis.
\end{proof}

Finally consider the constant mod $q$ conditions in the Brown-Schnetz example.  Here there is not so much to say combinatorially.

Setting a variable to $0$ corresponds to contracting it in the original graph (Proposition \ref{prop contr del}).  Then $\left.D^{11}_{G_{BS}}\right|_{a_{16}=0}$ is the same as $D^{11}_{G_{BS}/a_{16}}$.  $G_{BS}/a_{16}$ is a strictly simpler graph which can be denominator reduced all the way and so the $c_2$ invariant is constant giving Lemma \ref{lem dehom}.

A key fact for $[R]_q$ is that $R$ is linear in $a_{14}$ and $a_{15}$.  This fact is also necessary for denominator reducing $QR$ one more step.  To see that the fact is true return to the decomposition of $D^{10}_{G_{BS}}$ in \eqref{eq D10}.  Of the variables $a_{14}$, $a_{15}$ and $a_{16}$, $C$ depends only on $a_{15}$, $B$ depends only on $a_{14}$ and $a_{16}$ apart from the $Q$ which gets pulled out, and $D$ depends only on $a_{16}$ apart from the $Q$ which gets pulled out.  Therefore $[R]_q$ depends only on the coefficient of $a_{14}a_{15}$ in $R$.  The remainder of Lemma \ref{lem other lem} is a computation for which this author has no particular insight.
%\[
%  a_{12}^2*a_{13}+a_{12}^2*a_{13}^2*a_{16}+a_{12}*a_{13}^2.
%\]

\medskip

Looking back at the whole calculation the key is that there is a decomposition of $G_{BS}$ into $G_1$, $G_2$, $G_3$ where $G_3$ is small and nice while still allowing $G_1$ to be large and denominator reducible.  In other words a $3$-valent vertex gives a nice $G_3$, as long as losing this vertex to $G_1$ still lets us reduce maximally far, so the key is to have an \emph{extra} $3$-valent vertex.

\medskip

Let's consider a few other examples of using Theorem \ref{thm cov}.  

\begin{example}
Begin with 
\[
\includegraphics{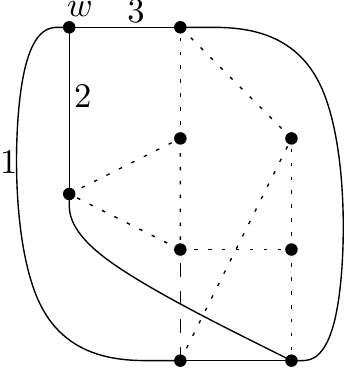}
\]
which is a decompletion of $P_{8,38}$ from \cite{Sphi4}.  Then, as in the previous subsection, we can denominator reduce the dotted edges and the result will factor.  Additionally denominator reduce the dashed edge. 

Now we are in a position to apply Theorem \ref{thm cov} with $G_1$ the dotted and dashed edges, $G_3$ the edges incident to $w$ and $G_2$ the remaining 3 edges.  This gives that $n=3$ and $v=4$ in the statement of the theorem.  Thus
\begin{align*}
& 2\ell(G_2\cup G_3) - 2\ell(G_3) + |G_1| - |G_2| - 2v - 2n + 3 \\
& = 4 - 0 + 10 - 3 - 8 - 6 + 3  \\
& = 0
\end{align*}
Then by the theorem we can make the change of variable $a_e\leftarrow a_e(a_1a_2+a_1a_3+a_2a_3)$ for $a_e\in G_2$ and the result will be divisible by $(a_1a_2+a_1a_3+a_2a_3)^4$.  Let $R$ be the result of this division.  Then it turns out $R$ is linear in $a_1$ and so we can perform one further reduction.

Note that we were able to do one fewer initial reduction compared to $G_{BS}$, so ultimately we have one more variable in play.
\end{example}

\begin{example}
Now consider the graph
\[
\includegraphics{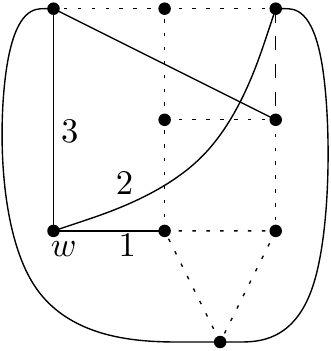}
\]
which is a decompletion of $P_{8,39}$ from \cite{Sphi4}.  Using the previous subsection we can denominator reduce the dotted edges and the result will factor.  Additionally denominator reduce the dashed edge. 

Apply Theorem \ref{thm cov} with $G_1$ the dotted and dashed edges, $G_3$ the edges incident to $w$ and $G_2$ the remaining 3 edges.  This gives that $n=3$ and $v=3$ in the statement of the theorem.  Thus
\begin{align*}
& 2\ell(G_2\cup G_3) - 2\ell(G_3) + |G_1| - |G_2| - 2v - 2n + 3 \\
& = 2 - 0 + 10 - 3 - 6 - 6 + 3  \\
& = 0
\end{align*}
Then once again by the theorem we can make the change of variable $a_e\leftarrow a_e(a_1a_2+a_1a_3+a_2a_3)$ for $a_e\in G_2$ and the result will be divisible by $(a_1a_2+a_1a_3+a_2a_3)^4$.  Let $R$ be the result of this division.  It turns out again that $R$ is linear in $a_1$ and so we can perform one further reduction.

This is not really a satisfactory answer for this graph since yet again we have one more variable remaining compared to $G_{BS}$ despite the fact that by the point counts this should not be necessary for this graph.  Such changes of variables were known to Brown and Schnetz (personal communication, 2010); the difficulty is to get one further step.  However, it is nice to know that at least this far the calculations aren't ad-hoc, but rather are directly analogous to the change of variables of \cite{BrS}.
\end{example}

\bibliographystyle{plain}
\bibliography{main}

\end{document}